\newtheorem{thm}{Theorem} [section]
\newtheorem{cor}[thm]{Corollary}
\newtheorem{lem}[thm]{Lemma}
\newtheorem{prop}[thm]{Proposition}
\theoremstyle{definition}
\newtheorem{defn}[thm]{Definition}
\theoremstyle{remark}
\newtheorem{rem}[thm]{Remark}
\numberwithin{equation}{section}
\newcommand{\fq}{{\mathbb F}_{q}}
\newcommand{\fp}{{\mathbb F}_{p}}
\newcommand{\fth}{{\mathbb F}_{3}}
\newcommand{\Tr}{\mbox{Tr}}
\newcommand{\rmv}[1]{}
\def\<{\left\langle}
\def\>{\right\rangle}
\begin{document}

\title[A class of ternary codes with few weights]
{A class of ternary codes with few weights}%
\author{Kaimin Cheng}
%    Address of record for the research reported here
\address{School of Mathematics and Information, China West Normal University,
Nanchong, 637002, P. R. China}
\email{ckm20@126.com,kcheng1020@gmail.com}
\thanks{This work was supported partially by the Natural Science Foundation of China (No. 12226335) and the China Scholarship Council Foundation(No. 202301010002).}
%\subjclass{Primary 11T22,11R18}%
\keywords{Linear codes, Weight distribution, Optimal codes, Exponential sums, Weil bound.}
\subjclass[2000]{Primary 94B05}
\date{\today}%zxzsasszxsmkht5rrfc  vasv
%\dedicatory{}%
%\commby{}%
% ----------------------------------------------------------------
\begin{abstract}
Let \(\ell^m\) be a power with \(\ell\) a prime greater than 3 and \(m\) a positive integer such that 3 is a primitive root modulo \(2\ell^m\). Let \(\mathbb{F}_3\) be the finite field of order 3, and let $\mathbb{F}$ be the \(\ell^{m-1}(\ell-1)\)-th extension field of \(\mathbb{F}_3\). Denote by \(\text{Tr}\)  the absolute trace map from $\mathbb{F}$ to \(\mathbb{F}_3\). For any \(\alpha \in \mathbb{F}_3\) and $\beta \in\mathbb{F}$, let \(D\) be the set of nonzero solutions in $\mathbb{F}$ to the equation \(\text{Tr}(x^{\frac{q-1}{2\ell^m}} + \beta x) = \alpha\). In this paper, we investigate a ternary code \(\mathcal{C}\) of length \(n\), defined by \(\mathcal{C} := \{(\text{Tr}(d_1x), \text{Tr}(d_2x), \dots, \text{Tr}(d_nx)) : x \in \mathbb{F}\}\) when we rewrite \(D = \{d_1, d_2, \dots, d_n\}\). Using recent results on explicit evaluations of exponential sums, the Weil bound, and combinatorial techniques, we determine the Hamming weight distribution of the code \(\mathcal{C}\). Furthermore, we show that when \(\alpha = \beta = 0\), the dual code of \(\mathcal{C}\) is optimal with respect to the Hamming bound.
\end{abstract}

\maketitle

\section{Introduction}
Let $p$ be a prime number and $q$ a power of $p$. Denote by $\fq$ the finite field with $q$ elements and by $\fq^*=\fq\setminus\{0\}$ the set of nonzero elements in $\fq$. We denote by ${\rm Tr}$ the absolute \textit{trace} function from $\mathbb{F}_q$ onto $\mathbb{F}_p$. Let $n,k$ and $d$ be positive integers. An $[n,k,d]$ $p$-ary linear code is a $k$-dimensional subspace of $\mathbb{F}_p$ with minimum (Hamming) distance $d$. For a positive integer $i$, denote by $A_i$ the number of codewords with Hamming weight $i$ in a code $\mathcal{C}$ of length $n$.
Let
$$1+A_1z+A_2z^2+\cdots+A_nz^n$$
be the \textit{weight enumerate} of $\mathcal{C}$.
We call the sequence $(A_1,A_2,\ldots,A_n)$ the\textit{ weight distribution} of
$\mathcal{C}$. If the number of nonzero $A_i$ in the sequence $(A_1,A_2,\ldots,A_n)$ is $N$, then the code $\mathcal{C}$ is called an $N$-weight code. The weight distribution determines both the minimum distance of the code and its error-correcting capability. It also contains key information on computation of the probability of error detection and correction with respect to some error detection and correction algorithms \cite{[Klo]}. As a result, the study of the weight distribution of a linear code an important research topic in coding theory. Given a set $D=\{d_1,d_2,\ldots,d_n\}\subseteq\mathbb{F}_q^*$, we define $\mathcal{C}_D$ as
\begin{align}\label{c1-1}
\mathcal{C}_D:=\{({\rm Tr}(d_1x),{\rm Tr}(d_2x),\ldots,{\rm Tr}(d_nx)):\ x\in\mathbb{F}_q\}.
\end{align}
Clearly, $\mathcal{C}_D$ is a linear codes of length $n$ over $\mathbb{F}_p$, and we call $D$ the \textit{defining set} of the code $\mathcal{C}_D$. This defining set construction is crucial, as any linear code over $\fp$  can be represented as for some defining set $D$ (which may be a multiset) \cite{[Din]}. Several classes of codes with few weights have been constructed by appropriately choosing the defining set $D\subseteq\mathbb{F}_q$ (see, for instance, \cite{[Din1],[Din2],[TXF],[WDX]}).

The computation of explicit values of exponential sums over finite fields plays a crucial role in the construction of linear codes. The problem of explicitly evaluating the Weil sums with specific forms has been investigated extensively (see, for example, \cite{[Car1],[Car2],[Cou],[FH],[Moi],[Moi2],[Wan1],[Wan2]}), but it is still quite difficult in general. Recently, the author and Gao \cite{[CG]} evaluated a class of binomial Weil sums, from which they obtained a class of two-weight codes and determined the weight enumerates. In the paper, based on the results of binomial Weil sums \cite{[CG]}, by using the Weil bound together with some combinatorial technique, a class of ternary codes, that is, $p=3$, with few weights is constructed, and the weight distribution is settled. Additionally, we determine the parameters of the dual codes, identifying some of them as optimal. This work significantly extends the results of \cite{[CG]} and \cite{[WDX]} to a broader context, and the linear codes with few weights presented in this paper have applications in secret sharing \cite{[ADHK]}, authentication codes \cite{[DW]}, combinatorial designs \cite{[CK]}, etc.

\section{Preliminaries}
This section provides the necessary preliminaries and background results.. Let $p$ be an odd prime.  Let $m$ be a positive integer and $\ell$ be an odd prime not divisible by $p$ such that $p$ is a primitive root modulo $2\ell^m$. Let $\fq$ be the finite field of order $q$, where $q=p^{e}$ with $e=\phi(2\ell^m)=(\ell-1)\ell^{m-1}$. Let $g$ be a fixed primitive element of $\fq$ and $\xi=g^{\frac{q-1}{2\ell^m}}$. Let $\chi$ be the canonical additive character of $\fq$, i.e., $\chi(c)=e^{\frac{2\pi i{\rm Tr}(c)}{p}}$ for any $c\in\fq$. For any $a,b\in\fq$, define two exponential sums over $\fq$ by
\begin{align}\label{c2-1-1}
S_{2\ell^m}(a,b):=\sum_{x\in\fq^*}\chi(ax^{\frac{q-1}{2\ell^m}}+bx)
\end{align}
and
$$S_{2\ell^m}(a):=\sum_{i=0}^{2\ell^m-1}\chi(a\xi^{i}).$$
One readily finds that
\begin{align}\label{c2-1}
S_{2\ell^m}(a,0)=\frac{q-1}{2\ell^m}S_{2\ell^m}(a).
\end{align}
So, the evaluation of $S_{2\ell^m}(a,b)$ can be derived from $S_{2\ell^m}(a)$ immediately if $b=0$. For the case of $b\ne 0$, our previous results \cite{[CG]} provide a way to get the values of $S_{2\ell^m}(a,b)$, which leads us to some more precise results (see Section 3).
\begin{lem}\cite{[CG]}\label{lem2.1}
For any $a\in\fq$ and $b\in\fq^*$,
$$S_{2\ell^m}(a,b)=\chi(ab^{-\frac{q-1}{N}})\sqrt{q}-
\frac{\sqrt{q}+1}{{2\ell^m}}
S_{2\ell^m}(ab^{-\frac{q-1}{N}}).$$
\end{lem}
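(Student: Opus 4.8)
The plan is to remove the linear term $bx$ by a change of variable, expand the resulting sum over the cyclotomic classes attached to the exponent $\frac{q-1}{2\ell^m}$, and then evaluate the Gauss sums that appear; this last step is available in closed form precisely because the assumption that $p$ is a primitive root modulo $2\ell^m$ puts us in the semi-primitive (``pure'') case.

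First I would substitute $x\mapsto b^{-1}x$ in the definition of $S_{2\ell^m}(a,b)$. As $b\in\fq^*$ this is a bijection of $\fq^*$, and the sum becomes
\[
S_{2\ell^m}(a,b)=\sum_{x\in\fq^*}\chi\!\left(ab^{-\frac{q-1}{2\ell^m}}\,x^{\frac{q-1}{2\ell^m}}+x\right)=S_{2\ell^m}(c,1),\qquad c:=ab^{-\frac{q-1}{2\ell^m}},
\]
so it suffices to handle $b=1$, i.e.\ to prove $S_{2\ell^m}(c,1)=\chi(c)\sqrt q-\frac{\sqrt q+1}{2\ell^m}S_{2\ell^m}(c)$ for every $c\in\fq$. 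Put $d=\frac{q-1}{2\ell^m}$. The fibres of $x\mapsto x^{d}$ on $\fq^*$ are the cyclotomic classes $C_i=\{x:x^{d}=\xi^{i}\}$, $0\le i<2\ell^m$, each of size $d$, and $\chi(cx^{d})=\chi(c\xi^{i})$ is constant on $C_i$; hence $S_{2\ell^m}(c,1)=\sum_{i=0}^{2\ell^m-1}\chi(c\xi^{i})\,\eta_i$ with $\eta_i:=\sum_{x\in C_i}\chi(x)$ the $i$th Gaussian period of order $2\ell^m$. Expanding the indicator of the coset $C_i$ in the multiplicative characters $\psi^{k}$ of $\fq^*$ trivial on $\{x^{2\ell^m}:x\in\fq^*\}$ (where $\psi$ is a fixed character of order $2\ell^m$) gives
\[
\eta_i=\frac{1}{2\ell^m}\sum_{k=0}^{2\ell^m-1}\psi(g)^{-ki}\,G(\psi^{k}),\qquad G(\psi^{k}):=\sum_{x\in\fq^*}\psi^{k}(x)\chi(x),\quad G(\psi^{0})=-1.
\]

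The key step is the evaluation of $G(\psi^{k})$ for $1\le k\le 2\ell^m-1$. Because $p$ is a primitive root modulo $2\ell^m$, the group $(\Z/2\ell^m\Z)^{\times}$ is cyclic of order $e=\phi(2\ell^m)$ and $p^{e/2}$ is its unique element of order $2$; hence $p^{e/2}\equiv-1\pmod n$ for every divisor $n>2$ of $2\ell^m$. Thus each $\psi^{k}$ of order $>2$ is a semi-primitive character, whose Gauss sum is given in closed form by the Baumert--McEliece evaluation: $G(\psi^{k})=\varepsilon_k\sqrt q$ with $\varepsilon_k\in\{\pm1\}$ read off from the $2$-adic valuation of $p^{t}+1$, $2t=\mathrm{ord}_n(p)$. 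The single remaining case, $\psi^{k}$ of order $2$, is the quadratic Gauss sum over $\fq$, equal to the classical $(-1)^{e-1}i^{e}\sqrt q$ (for $p\equiv3\pmod4$). With these substitutions, every $G(\psi^{k})$, $k\neq0$, is an explicit $\pm\sqrt q$.

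To conclude I would insert the formula for $\eta_i$ into $S_{2\ell^m}(c,1)=\sum_i\chi(c\xi^{i})\eta_i$ and interchange the two summations. The term $k=0$ yields $\frac{G(\psi^{0})}{2\ell^m}\sum_i\chi(c\xi^{i})=-\frac{1}{2\ell^m}S_{2\ell^m}(c)$; for $k\ge1$ one resums the inner sum by orthogonality on $\mu_{2\ell^m}$, $\sum_{k}\psi(g)^{-ki}=2\ell^m\,\mathbbm{1}[i=0]$, so that the $k\ge1$ part equals $\sqrt q\,\chi(c)-\frac{\sqrt q}{2\ell^m}S_{2\ell^m}(c)$, and the two contributions add up to the claimed $\chi(c)\sqrt q-\frac{\sqrt q+1}{2\ell^m}S_{2\ell^m}(c)$. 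The hard part is exactly this last bookkeeping: the signs $\varepsilon_k$ must be controlled simultaneously, and characters of order $2\ell^{j}$ behave differently from those of order $\ell^{j}$ (their Gauss-sum signs are governed by $p$ and $\ell$ modulo $4$). It is the primitive-root hypothesis, together with $8\mid q-1$ (automatic since $e$ is even and $p$ odd), that makes the signs $\varepsilon_k$ cohere, and the orthogonality collapse then delivers the identity in the stated form.
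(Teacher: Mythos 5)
Your reduction to $b=1$, the expansion of $S_{2\ell^m}(c,1)$ into Gaussian periods, the character-sum expression for $\eta_i$, and the observation that the primitive-root hypothesis makes every character of order dividing $2\ell^m$ semiprimitive are all correct (for what it is worth, the paper offers no proof to compare with: Lemma \ref{lem2.1} is imported from \cite{[CG]}). The gap is in the one step that carries the content of the lemma. Your collapse of the $k\ge 1$ part to $\sqrt q\,\chi(c)-\frac{\sqrt q}{2\ell^m}S_{2\ell^m}(c)$ by orthogonality is valid only if $G(\psi^k)=+\sqrt q$ for \emph{every} $k\ne 0$, i.e.\ only if all the signs $\varepsilon_k$ equal $+1$; you assert that the hypotheses make the signs ``cohere'', while conceding in the same breath that characters of order $2\ell^j$ behave differently from those of order $\ell^j$. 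They do behave differently, and the uniform collapse fails already in the smallest instance of the paper's own setting: for $p=3$, $\ell=5$, $m=1$, $q=3^4=81$, $2\ell^m=10$, the quadratic character has Gauss sum $(-1)^{3}(i\sqrt 3)^4=-9=-\sqrt q$ by Hasse--Davenport, and the Hasse--Davenport product formula (note $4=1$ in characteristic $3$) then forces every order-$10$ character to have Gauss sum $-9$ as well, while the order-$5$ characters have $+9$. So $\varepsilon_k=-1$ exactly for odd $k$, and the inner sum over $k$ is not $2\ell^m\,\mathbbm{1}[i=0]-1$ but acquires an extra block supported on $i\in\{0,\ell^m\}$.

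Carrying your own expansion through with these signs gives, in that example, the odd-$k$ family contributing $-\tfrac{2\sqrt q}{10}\bigl(5\chi(c)-5\chi(-c)\bigr)$, so one lands on $\chi(-c)\sqrt q-\tfrac{\sqrt q+1}{10}S_{10}(c)$ rather than the asserted $\chi(c)\sqrt q-\cdots$. In other words, the sign bookkeeping is not a cosmetic afterthought: it changes (or at least threatens to change) the very shape of the identity, and reconciling the outcome with the stated formula --- splitting the $k$'s according to the order of $\psi^k$ (namely $2$, $\ell^j$, $2\ell^j$), evaluating each family's sign via Stickelberger/Baumert--McEliece together with the quadratic Gauss sum, resumming the resulting incomplete geometric sums, and checking how the leftover terms recombine under the specific hypotheses on $p$, $\ell$, $m$ --- is precisely the work the citation to \cite{[CG]} stands in for. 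As written, your proposal sets up the standard machinery correctly and then asserts the conclusion at the only point where a proof is required, so it does not establish the lemma.
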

Note that $\xi=g^{\frac{q-1}{2\ell^m}}$, and one checks that $\fq=\fp(\xi)$. It follows that $\{\xi,\xi^2,\ldots,\xi^{e}\}$ is a basis of $\fq$ over $\fp$. Then all elements of $\fq$ can be $\fp$-linearly represented by $\{\xi,\xi^2,\ldots,\xi^{e}\}$.
\begin{lem}\cite{[CG]}\label{lem2.2}
For any $a\in\fq$, if we write $a=\sum_{i=1}^ea_i\xi^i$ with each $a_i\in\fp$, then
$$S_{2\ell^m}(a)=\sum_{i=0}^{\ell^{m-1}-1}\Big(\zeta_p^{\Delta
_i}+\zeta_p^{-\Delta_i}+\sum_{t=1}^{\ell-1}\left(\delta_{i,t}+
\delta_{i,t}^{-1}\right)\Big),$$
where
\begin{align*}
\Delta_i=-\ell^{m-1}\sum_{k=1}^{\ell-1}(-1)^ka_{k\ell^{m-1}-i},\  \delta_{i,t}=\zeta_p^{(-1)^t\Delta_i+\ell^ma_{t\ell^{m-1}-i}},
\end{align*}
and $\zeta_p$ represents a primitive $p$-th root of unity.
\end{lem}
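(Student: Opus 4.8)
The plan is to write $S_{2\ell^m}(a)=\sum_{j=0}^{2\ell^m-1}\chi(a\xi^j)=\sum_{j=0}^{2\ell^m-1}\zeta_p^{\Tr(a\xi^j)}$ and to reduce the whole identity to the values of $\Tr(\xi^k)$, $k\ge 0$. Expanding $a=\sum_{i'=1}^{e}a_{i'}\xi^{i'}$ in the $\fp$-basis $\{\xi,\dots,\xi^e\}$ and using $\fp$-linearity of $\Tr$ gives $\Tr(a\xi^j)=\sum_{i'=1}^{e}a_{i'}\Tr(\xi^{i'+j})$. So the proof has two parts: (i) evaluate $\Tr(\xi^k)$ in closed form, and (ii) decide, for each $j$, which $a_{i'}$ actually contribute and match the resulting exponents with $\Delta_i$ and with the exponents of $\delta_{i,t}$.

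For (i): since $\xi$ has order $2\ell^m$, $\xi^k$ is a primitive $d$-th root of unity with $d=2\ell^m/\gcd(k,2\ell^m)$. Its $\fp$-conjugates are $\xi^{kp^j}$, $0\le j<e$; because $p$ is a primitive root modulo $2\ell^m$, the surjection $(\Z/2\ell^m)^{\times}\twoheadrightarrow(\Z/d)^{\times}$ shows $p$ is a primitive root modulo every divisor $d\mid 2\ell^m$, so $\xi^{kp^j}$ sweeps all $\phi(d)$ primitive $d$-th roots of unity, each exactly $e/\phi(d)$ times (note $\phi(d)\mid e$). Hence, in $\fp$,
\[
\Tr(\xi^k)=\frac{e}{\phi(d)}\sum_{\operatorname{ord}(\rho)=d}\rho=\frac{e}{\phi(d)}\,\mu(d),
\]
by the standard identity $\sum_{\operatorname{ord}(\rho)=d}\rho=\mu(d)$. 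Since $\mu(d)=0$ unless $d\in\{1,2,\ell,2\ell\}$, we get $\Tr(\xi^k)=0$ whenever $\ell^{m-1}\nmid k$, and otherwise, writing $k=u\ell^{m-1}$ and reducing $u$ modulo $2\ell$,
\[
\Tr(\xi^{u\ell^{m-1}})=\ell^{m-1}\bigl((-1)^{u+1}+\ell\,\epsilon(u)\bigr),
\]
where $\epsilon(u)=1,-1,0$ according as $u\equiv 0,\ \ell,$ or otherwise modulo $2\ell$ (i.e.\ $\Tr(\xi^{u\ell^{m-1}})$ equals $(\ell-1)\ell^{m-1},\ -(\ell-1)\ell^{m-1},\ -\ell^{m-1},\ \ell^{m-1}$ in the respective cases $u\equiv 0$, $u\equiv\ell$, $u$ a nonzero even residue, $u$ an odd residue $\ne\ell$).

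For (ii): parametrize $j=i+s\ell^{m-1}$ with $i\in\{0,\dots,\ell^{m-1}-1\}$ and $s\in\{0,\dots,2\ell-1\}$, a bijection onto $\{0,\dots,2\ell^m-1\}$. Since $j\equiv i\pmod{\ell^{m-1}}$, in $\Tr(a\xi^j)=\sum_{i'}a_{i'}\Tr(\xi^{i'+j})$ only the $i'$ with $\ell^{m-1}\mid i'+i$ survive, i.e.\ exactly $i'=k\ell^{m-1}-i$ for $k=1,\dots,\ell-1$ (and these enumerate $\{1,\dots,e\}$ as $(i,k)$ vary, so every $a_i$ appears), and then $\Tr(\xi^{i'+j})=\Tr(\xi^{(k+s)\ell^{m-1}})$. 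Substituting the closed form from (i) and recognising $\Delta_i=-\ell^{m-1}\sum_{k=1}^{\ell-1}(-1)^k a_{k\ell^{m-1}-i}$ inside the $(-1)^{u+1}$-contribution gives
\[
\Tr(a\xi^j)=(-1)^s\Delta_i+\ell^m\sum_{k=1}^{\ell-1}a_{k\ell^{m-1}-i}\,\epsilon(k+s).
\]
A short case analysis on $s$ finishes it: $s=0$ yields exponent $\Delta_i$; $s=\ell$ yields $-\Delta_i$ (as $\ell$ is odd); for $s=\ell-t$ with $1\le t\le\ell-1$ only the term $k=t$ contributes to the $\epsilon$-sum (with $\epsilon(\ell)=-1$) and $(-1)^s=-(-1)^t$, so the exponent is $-\bigl((-1)^t\Delta_i+\ell^m a_{t\ell^{m-1}-i}\bigr)$, i.e.\ $\zeta_p^{\Tr(a\xi^j)}=\delta_{i,t}^{-1}$; for $s=2\ell-t$ with $1\le t\le\ell-1$ one likewise gets $\zeta_p^{\Tr(a\xi^j)}=\delta_{i,t}$. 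Summing over all $(i,s)$ reproduces exactly $\sum_{i=0}^{\ell^{m-1}-1}\bigl(\zeta_p^{\Delta_i}+\zeta_p^{-\Delta_i}+\sum_{t=1}^{\ell-1}(\delta_{i,t}+\delta_{i,t}^{-1})\bigr)$, and the case $a=0$ is the specialization $\Delta_i=0$, $\delta_{i,t}=1$.

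I expect the main obstacle to be part (i): recognising that the hypothesis ``$p$ is a primitive root modulo $2\ell^m$'' forces $p$ to be a primitive root modulo every divisor, which is precisely what collapses $\Tr(\xi^k)$ to a single Möbius value and produces the crucial sparsity $\Tr(\xi^k)=0$ for $\ell^{m-1}\nmid k$; without this the sum does not visibly reduce to $2\ell^m$ explicit terms. Part (ii) is then routine but delicate: one must keep careful track of the index ranges (that $k\ell^{m-1}-i$ really enumerates $\{1,\dots,e\}$) and of the parity/sign bookkeeping that converts the length-$\ell$ correction $\epsilon(k+s)$ into the $\ell^m a_{t\ell^{m-1}-i}$ term distinguishing $\Delta_i$ from the exponent of $\delta_{i,t}$.
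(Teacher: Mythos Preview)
The paper does not prove Lemma~\ref{lem2.2}; it is quoted from \cite{[CG]} without proof, so there is no in-paper argument to compare against. That said, your proof is correct. Part~(i) of your argument is in fact a self-contained proof of Lemma~\ref{lem2.4} (also imported from \cite{[CG]}): the key observation that $p$ being a primitive root modulo $2\ell^m$ forces it to be a primitive root modulo every divisor $d\mid 2\ell^m$, together with $\sum_{\operatorname{ord}(\rho)=d}\rho=\mu(d)$, yields exactly the five-case trace table stated there. Part~(ii) is a clean reduction: the parametrisation $j=i+s\ell^{m-1}$ and the identification $i'=k\ell^{m-1}-i$ are the right bookkeeping, and your case split on $s\in\{0,\ell\}\cup\{\ell-t\}\cup\{2\ell-t\}$ recovers the four summands $\zeta_p^{\pm\Delta_i}$ and $\delta_{i,t}^{\pm1}$ precisely. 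The sign checks (e.g.\ $(-1)^{\ell-t}=-(-1)^t$ since $\ell$ is odd, and the placement of $\epsilon(k+s)=\pm1$ only at $k=t$) all go through.
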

So for the given $a\in\fq$, the significant step in process of evaluating $S_{2\ell^m}(a)$ is to figure out all representation coefficients $a_i$ of $a$. For convenience, if we write $a=\sum_{i=1}^ea_i\xi^i$ with each $a_i\in\fp$, then one lets
$$a^{(j)}:=(a_{\ell^{m-1}-j},a_{2\ell^{m-1}-j},\ldots,
a_{(\ell-1)\ell^{m-1}-j})\in\mathbb{F}_p^{\ell-1}
$$
for any $0\le j\le \ell^{m-1}-1$.
\begin{cor}\label{cor2.3}
For any $a\in\fp$, we have
$$S_{2\ell^m}(a)=\zeta_p^{\ell^{m-1}(\ell-1)a}+
\zeta_p^{-\ell^{m-1}(\ell-1)a}+(\ell-1)(\zeta_p^{\ell^{m-1}a}+
\zeta_p^{-\ell^{m-1}a})+2\ell^m-2\ell,$$
where $\zeta_p$ is a primitive $p$-th root of unity.
In particular, if $p=3$, then
\begin{align}\label{c2-2}
S_{2\ell^m}(1)=S_{2\ell^m}(2)=\begin{cases}
2\ell^m-3\ell+3,&\text{if}\ \ell\equiv 1\pmod{3},\\
2\ell^m-3\ell,&\text{if}\ \ell\equiv -1\pmod{3},\\
\end{cases}
\end{align}
which are positive integers.
\end{cor}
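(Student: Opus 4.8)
The plan is to specialize Lemma~\ref{lem2.2} to the case $a\in\fp$ and then to evaluate the resulting root-of-unity sum explicitly. First I would observe that if $a\in\fp$, then writing $a$ in the basis $\{\xi,\xi^2,\dots,\xi^e\}$ is awkward because $a$ is not literally one of the $\xi^i$; so the first step is to relate $S_{2\ell^m}(a)$ for $a\in\fp$ directly to $S_{2\ell^m}(1)$ via $S_{2\ell^m}(a)=\sum_{i=0}^{2\ell^m-1}\chi(a\xi^i)$ and the fact that $a\mapsto a\xi^i$ is, for fixed $a\in\fp^*$, just a relabeling governed by the cyclotomic structure of $\xi$. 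Concretely, since $\xi$ is a primitive $2\ell^m$-th root of unity in $\fq$ and $p$ is a primitive root mod $2\ell^m$, the minimal polynomial of $\xi$ over $\fp$ is the $2\ell^m$-th cyclotomic polynomial, and $\{1,\xi,\dots,\xi^{e-1}\}$ (equivalently $\{\xi,\dots,\xi^e\}$) is a basis. I would compute the coordinate vector of $a\cdot 1$ for $a\in\fp$ in this basis using the relation $\Phi_{2\ell^m}(\xi)=0$, i.e. $\sum_{j=0}^{\ell-1}(-1)^j\xi^{j\ell^{m-1}}=0$, which expresses $\xi^{(\ell-1)\ell^{m-1}}=-\sum_{j=1}^{\ell-1}(-1)^j\xi^{j\ell^{m-1}}$ and hence lets me read off the $a_i$ for the element $a\in\fp$.

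Once the coordinates $a_i$ of $a\in\fp$ are known, I would plug them into the formulas for $\Delta_i$ and $\delta_{i,t}$ in Lemma~\ref{lem2.2}. I expect that all $a_i$ vanish except those indexed by multiples of $\ell^{m-1}$, so that $\Delta_i=0$ and $\delta_{i,t}=1$ for all $i\neq 0$, contributing $1+1+\sum_{t=1}^{\ell-1}2 = 2\ell$ for each of the $\ell^{m-1}-1$ nonzero values of $i$, i.e. $2\ell(\ell^{m-1}-1)=2\ell^m-2\ell$; this accounts for the constant term in the stated formula. For $i=0$, the surviving coordinates give $\Delta_0 = \pm\ell^{m-1}(\ell-1)a$ (up to sign, which is immaterial since $\zeta_p^{\Delta_0}+\zeta_p^{-\Delta_0}$ is even) and $\delta_{0,t}=\zeta_p^{(-1)^t\Delta_0 + \ell^m a_{t\ell^{m-1}}}$, which after substituting the coordinate values should collapse to $\zeta_p^{\pm\ell^{m-1}a}$, yielding the terms $\zeta_p^{\ell^{m-1}(\ell-1)a}+\zeta_p^{-\ell^{m-1}(\ell-1)a}$ and $(\ell-1)(\zeta_p^{\ell^{m-1}a}+\zeta_p^{-\ell^{m-1}a})$.

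For the special case $p=3$: here $\zeta_3+\zeta_3^{-1}=-1$ and $\zeta_3^0+\zeta_3^0=2$, so each pair $\zeta_3^{c}+\zeta_3^{-c}$ equals $2$ if $3\mid c$ and $-1$ otherwise. I would then split into cases according to $\ell \bmod 3$. When $\ell\equiv 1\pmod 3$, both $\ell^{m-1}(\ell-1)$ and $\ell^{m-1}$ are such that the relevant exponents mod $3$ behave so that one of the pairs contributes $2$ and is corrected; tracking this gives the value $2\ell^m-3\ell+3$. When $\ell\equiv -1\pmod 3$, the analogous count gives $2\ell^m-3\ell$. Finally, positivity is immediate since $\ell\geq 5$ forces $2\ell^m - 3\ell \geq 2\ell^2 - 3\ell = \ell(2\ell-3) > 0$. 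The main obstacle I anticipate is the first step — correctly and cleanly extracting the basis coordinates of a scalar $a\in\fp$ from the cyclotomic relation and verifying that only the multiples-of-$\ell^{m-1}$ coordinates survive — since everything downstream is a routine, if slightly fiddly, evaluation of sums of roots of unity.
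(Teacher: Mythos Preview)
Your approach is essentially the paper's: use the cyclotomic relation $\Phi_{2\ell^m}(\xi)=0$ to extract the basis coordinates of $a\in\fp$, then feed them into Lemma~\ref{lem2.2}. One small correction: you should solve the relation $\sum_{j=0}^{\ell-1}(-1)^j\xi^{j\ell^{m-1}}=0$ for the constant term $1$ (giving $1=\sum_{k=1}^{\ell-1}(-1)^{k-1}\xi^{k\ell^{m-1}}$, hence $a^{(0)}=(a,-a,\dots,a,-a)$), not for $\xi^{(\ell-1)\ell^{m-1}}$, which is already a basis vector; your displayed rearrangement also has an index slip. A second caveat: your positivity bound $2\ell^m-3\ell\geq 2\ell^2-3\ell$ silently assumes $m\geq 2$; for $m=1$ one gets $2\ell-3\ell=-\ell<0$, so the clause ``which are positive integers'' actually requires $m\geq 2$ (the paper offers no separate argument for $m=1$ either).
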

\begin{proof}
First, one sees that $\xi$ is a primitive $2\ell^m$-th root of unity. It implies that $Q_{2\ell^m}(\xi)=0$, where $Q_{2\ell^m}(x)$ is the $2\ell^m$-th cyclotomic polynomial over $\fp$. Note that
$$Q_{\ell^m}(x)=\sum_{k=0}^{\ell-1}x^{k\ell^{m-1}}\ \text{and}\
Q_{2\ell^m}(x)=Q_{\ell^m}(-x).$$
It follows that
\begin{align}\label{c2-3}
1=\sum_{k=1}^{\ell-1}(-1)^{k-1}\xi^{k\ell^{m-1}}.
\end{align}
Then $a=\sum_{k=1}^{\ell-1}(-1)^{k-1}a\xi^{k\ell^{m-1}}$,
that is,
$$a^{(0)}=(\underbrace{a,-a,\ldots,a,-a}_{\ell-1}),\ \text{and}\
a^{(j)}=(\underbrace{0,0,\ldots,0,0}_{\ell-1})$$
is an all zero sub-vector for each $1\le j\le \ell^{m-1}-1$. Then one computes that
$$
\Delta_j=\begin{cases}
\ell^{m-1}(\ell-1)a,&\text{if}\ j=0,\\
0,&\text{if}\ j\ne 0,
\end{cases}
\ \text{and}\ \delta_{j,t}=\begin{cases}
\zeta_p^{(-1)^{t-1}\ell^{m-1}a},&\text{if}\ j=0,\\
1,&\text{if}\ j\ne 0,
\end{cases}
$$
where $\Delta_j$ and $\delta_{j,t}$ are defined as in Lemma \ref{lem2.2}. Therefore, one obtains from Lemma \ref{lem2.2} that
$$S_{2\ell^m}(a)=\zeta_p^{\ell^{m-1}(\ell-1)a}+
\zeta_p^{-\ell^{m-1}(\ell-1)a}+(\ell-1)(\zeta_p^{\ell^{m-1}a}+
\zeta_p^{-\ell^{m-1}a})+2\ell^m-2\ell,$$
as desired.
\end{proof}
The following result from \cite{[CG]} on the trace of any power $\xi^j$  will be used in Section 3.
\begin{lem}\label{lem2.4}
For any nonnegative integer $j$, we have
$${\rm Tr}(\xi^j)=\begin{cases}
(\ell-1)\ell^{m-1},& \text{if}\ 2\ell^m\mid j,\\
-(\ell-1)\ell^{m-1},& \text{if}\ \ell^m\mid j\ \text{but}\ 2\nmid j,\\
-\ell^{m-1},& \text{if}\ 2\ell^{m-1}\mid j\ \text{but}\ \ell^m\nmid j,\\
\ell^{m-1},& \text{if}\ \ell^{m-1}\mid j\ \text{but}\ \ell^m\nmid j,\ 2\nmid j,\\
0,& \text{otherwise}.
\end{cases}$$
\end{lem}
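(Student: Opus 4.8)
The plan is to evaluate the absolute trace directly as a sum of Frobenius conjugates and then recognize it as a sum of roots of unity controlled by the unit group $(\Z/2\ell^m\Z)^*$. Since $\fq=\fp(\xi)$ with $[\fq:\fp]=e=\phi(2\ell^m)$, the trace to $\fp$ is $\sum_{i=0}^{e-1}(\cdot)^{p^i}$, so I would first write
\[
{\rm Tr}(\xi^j)=\sum_{i=0}^{e-1}\xi^{jp^i}.
\]
Because $\xi$ has order $2\ell^m$, the exponent $jp^i$ matters only modulo $2\ell^m$; and because $p$ is a primitive root modulo $2\ell^m$ (so that its powers exhaust the cyclic group $(\Z/2\ell^m\Z)^*$ of order $e$), the residues $\{p^i \bmod 2\ell^m:0\le i<e\}$ are exactly the elements of $(\Z/2\ell^m\Z)^*$, each occurring once. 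Hence
\[
{\rm Tr}(\xi^j)=\sum_{u\in(\Z/2\ell^m\Z)^*}\xi^{ju}.
\]

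Next I would collapse this into a sum of primitive roots of unity. Set $d=\gcd(j,2\ell^m)$ and $M=2\ell^m/d$, so that $\eta:=\xi^j$ is a primitive $M$-th root of unity and $\xi^{ju}=\eta^{u}$ depends only on $u\bmod M$. The reduction homomorphism $(\Z/2\ell^m\Z)^*\to(\Z/M\Z)^*$ is surjective with kernel of size $\phi(2\ell^m)/\phi(M)$, so every fibre has this common size and
\[
{\rm Tr}(\xi^j)=\frac{\phi(2\ell^m)}{\phi(M)}\sum_{r\in(\Z/M\Z)^*}\eta^{r}.
\]
The inner sum is the sum of all primitive $M$-th roots of unity in $\fq$. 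Since $p$ is odd and $p\neq\ell$, we have $p\nmid 2\ell^m$, hence $p\nmid M$, so $x^M-1$ is separable over $\fp$ and splits through the cyclotomic polynomials; applying Vieta's formula to $Q_M$ shows that the sum of its roots equals $\mu(M)$, read as an element of $\fp$. Therefore ${\rm Tr}(\xi^j)=\tfrac{\phi(2\ell^m)}{\phi(M)}\,\mu(M)$, an integer interpreted in $\fp$.

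Finally I would run through the divisor lattice of $2\ell^m$, whose divisors are $\ell^a$ and $2\ell^a$ for $0\le a\le m$. Only the squarefree values $M\in\{1,2,\ell,2\ell\}$ give $\mu(M)\neq0$; every other $M$ is divisible by $\ell^2$, forcing $\mu(M)=0$. Translating $M=2\ell^m/\gcd(j,2\ell^m)$ back into divisibility conditions on $j$ reproduces exactly the five cases of the lemma: $M=1$ iff $2\ell^m\mid j$; $M=2$ iff $\ell^m\mid j$ and $2\nmid j$; $M=\ell$ iff $2\ell^{m-1}\mid j$ and $\ell^m\nmid j$; $M=2\ell$ iff $\ell^{m-1}\mid j$, $\ell^m\nmid j$, $2\nmid j$; and $\mu(M)=0$ precisely when $\ell^{m-1}\nmid j$. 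Substituting $\phi(2\ell^m)=(\ell-1)\ell^{m-1}$, $\phi(2)=1$, $\phi(\ell)=\phi(2\ell)=\ell-1$ together with $\mu(1)=1$, $\mu(2)=\mu(\ell)=-1$, $\mu(2\ell)=1$ yields the stated values $(\ell-1)\ell^{m-1}$, $-(\ell-1)\ell^{m-1}$, $-\ell^{m-1}$, $\ell^{m-1}$, and $0$ otherwise. The only steps needing care are the constant fibre-size of the reduction map and the characteristic-$p$ validity of the identity that the sum of primitive $M$-th roots equals $\mu(M)$; both are routine once $p\nmid 2\ell^m$ is recorded, so I expect the main effort to be the bookkeeping that matches each $M$ to its divisibility description, not any genuine obstacle.
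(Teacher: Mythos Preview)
Your argument is correct. Writing ${\rm Tr}(\xi^j)=\sum_{u\in(\Z/2\ell^m\Z)^*}\xi^{ju}$ via the primitive-root hypothesis, collapsing along the surjection $(\Z/2\ell^m\Z)^*\to(\Z/M\Z)^*$ with $M=2\ell^m/\gcd(j,2\ell^m)$, and invoking $\sum_{r\in(\Z/M\Z)^*}\eta^{r}=\mu(M)$ (valid in $\fp$ since $p\nmid M$) gives exactly ${\rm Tr}(\xi^j)=\frac{\phi(2\ell^m)}{\phi(M)}\mu(M)$, and your case split over the squarefree divisors $M\in\{1,2,\ell,2\ell\}$ recovers the five stated values.

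There is nothing to compare against in this paper: Lemma~\ref{lem2.4} is quoted from \cite{[CG]} without proof here. That said, the paper does carry out essentially the same manoeuvre in the proof of Proposition~\ref{prop4.3}, where it writes $\Tr(x^{(q-1)/2\ell^m})=\sum_{k\in\Z_{2\ell^m}^*}(x^{(q-1)/2\ell^m})^{k}$ and then breaks the unit-group sum into geometric pieces by hand rather than invoking $\mu$. Your Ramanujan-sum packaging is the cleaner of the two: it handles all cases uniformly and makes transparent why the answer depends only on $\gcd(j,2\ell^m)$, whereas the ad~hoc decomposition in Proposition~\ref{prop4.3} requires a separate geometric-series evaluation for each stratum. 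The only cosmetic point worth adding is that the values are asserted as integers (elements of $\Z$ read in $\fp$), which is consistent with how the paper uses them later.
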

At the end of this section, we present an important result of the existence of elements satisfying certain conditions.
\begin{lem}\label{lem2.5}
Let $q\ne 3$ be an odd prime power, $\fq$ be the finite field with $q$ elements and $\fq^*=\fq\setminus\{0\}$. Let $g$ be a primitive element of $\fq$. For any $\alpha\in\fq^*$, denote ${\rm Ind}(\alpha)$ to be the least nonnegative integer $k$ such that $\alpha=g^k$. Let $d$ be a positive divisor of $q-1$. Let $C_j$ be the $j$-th cyclotomic class of order $d$ defined by $C_j=\{g^jx^d:\ x\in\fq^*\}$ with $0\le j\le d-1$. Let $\beta$ be a given nonzero element of $\fq$. We have that if $d\le d_0:=(2-\frac{4}{\sqrt{5}})^{1/2}q^{1/4}\approx 0.4595q^{1/4}$ then
for any two integers $0\le j_1,j_2\le q-2$ there exists an element $\gamma\in\fq^*\setminus\{\pm\beta\}$ such that $\beta+\gamma\in C_{j_1}$ and $\beta-\gamma\in C_{j_2}$.
\end{lem}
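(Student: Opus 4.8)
The plan is to count, via additive characters, the number of $\gamma\in\fq^*$ for which $\beta+\gamma\in C_{j_1}$ and $\beta-\gamma\in C_{j_2}$ simultaneously, and show this count is positive (and indeed exceeds the number of "bad" values $\gamma\in\{0,\beta,-\beta\}$) once $d$ is small enough relative to $q^{1/4}$. The indicator that a nonzero element $u$ lies in the cyclotomic class $C_j$ can be written using the multiplicative characters of order dividing $d$: $\mathbbm{1}[u\in C_j]=\frac{1}{d}\sum_{\psi^d=\varepsilon}\psi(u)\overline{\psi}(g^j)$, where the sum runs over the $d$ multiplicative characters $\psi$ whose $d$-th power is trivial. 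Writing $N$ for the number of $\gamma\in\fq^*$ with $\beta\pm\gamma\ne 0$, $\beta+\gamma\in C_{j_1}$ and $\beta-\gamma\in C_{j_2}$, we get
\begin{align*}
N=\frac{1}{d^2}\sum_{\psi_1^d=\varepsilon}\sum_{\psi_2^d=\varepsilon}\overline{\psi_1}(g^{j_1})\overline{\psi_2}(g^{j_2})\sum_{\substack{\gamma\in\fq^*\\ \beta\pm\gamma\ne 0}}\psi_1(\beta+\gamma)\psi_2(\beta-\gamma).
\end{align*}

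Next I would isolate the main term. When $\psi_1=\psi_2=\varepsilon$ the inner sum is $q-3$ (all $\gamma\in\fq^*$ except $\gamma=\pm\beta$), contributing $(q-3)/d^2$. For the remaining $d^2-1$ pairs $(\psi_1,\psi_2)\ne(\varepsilon,\varepsilon)$, the inner sum $\sum_{\gamma}\psi_1(\beta+\gamma)\psi_2(\beta-\gamma)$ is (up to a bounded correction for the few excluded $\gamma$) a Jacobi-type character sum in the variable $\gamma$; after the substitution $\gamma\mapsto \beta\gamma$ it becomes $\sum_{t}\psi_1(1+t)\psi_2(1-t)$ times $\psi_1\psi_2(\beta)$, which is a classical sum bounded by $\sqrt q$ by the Weil bound whenever not both characters are trivial (one must check the degenerate case $\psi_1=\varepsilon\ne\psi_2$ or vice versa separately; there the sum is a Gauss-type sum also of size at most $\sqrt q$, or small). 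Collecting everything yields
\begin{align*}
\Big| N-\frac{q-3}{d^2}\Big|\le \frac{d^2-1}{d^2}\big(\sqrt q+c\big)
\end{align*}
for a small absolute constant $c$ accounting for the boundary terms $\gamma\in\{\beta,-\beta\}$ (at most $2$ per pair). Hence $N>0$, and in fact $N>3$, provided $\frac{q-3}{d^2}>\sqrt q+c+3$, which after clearing denominators is an inequality of the shape $q-3>d^2(\sqrt q+\text{small})$; solving for $d$ gives $d<(2-\tfrac{4}{\sqrt5})^{1/2}q^{1/4}$ in the stated normalization (the constant $2-4/\sqrt5$ is exactly what makes the worst-case quadratic-in-$\sqrt q$ inequality tight). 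Since $N>3\ge\#\{0,\beta,-\beta\}$, there is a valid $\gamma\in\fq^*\setminus\{\pm\beta\}$.

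**The main obstacle** I expect is the careful bookkeeping of the degenerate character pairs and the excluded values $\gamma\in\{0,\pm\beta\}$: one must verify that when exactly one of $\psi_1,\psi_2$ is the trivial character the resulting sum $\sum_t\psi(1\pm t)$ still obeys a $\sqrt q$-type bound (it is essentially $-\psi(\pm 2)$ plus a complete character sum that vanishes, so it is $O(1)$, which only helps), and that the $O(1)$ corrections from restricting $\gamma\ne \pm\beta$ inside each inner sum are genuinely bounded by an absolute constant independent of $q$ and $d$. Once those estimates are pinned down, optimizing the constant in the final inequality $q-3>d^2(\sqrt q+c)+3d^2$ to match $0.4595\,q^{1/4}$ is a routine computation; the only subtlety is ensuring the hypothesis $q\ne 3$ is used precisely where $q-3$ must be positive and where small cases could otherwise fail.
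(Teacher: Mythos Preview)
Your approach is essentially the same as the paper's: write the indicator of $C_j$ via the $d$ multiplicative characters of order dividing $d$, isolate the main term $(q-3)/d^2$ from the trivial pair, bound the mixed pairs (one character trivial) by $O(1)$ and the fully nontrivial pairs by $1+\sqrt q$ via the Weil bound, then conclude $N>0$ when $d\le d_0$.

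Two small points of confusion to fix. First, you wrote ``additive characters'' but of course mean multiplicative. Second, and more substantively: since your $N$ is already defined as the number of $\gamma\in\fq^*$ with $\beta\pm\gamma\ne 0$ landing in the prescribed classes, the bad values $\gamma\in\{0,\pm\beta\}$ are \emph{already excluded}, so you only need $N>0$, not $N>3$. If you insist on $N>3$ your final inequality becomes $\frac{q-3}{d^2}>\sqrt q+c+3$ and you will not recover the exact constant $(2-4/\sqrt5)^{1/2}$. The paper's computation is precisely: from $|N-\tfrac{q-3}{d^2}|<2+\sqrt q$ and $N=0$ one gets $d^2>\frac{q-3}{2+\sqrt q}=\frac{1-3/q}{1+2/\sqrt q}\,\sqrt q$, and this ratio is minimized at $q=5$, giving $2-4/\sqrt5$. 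Drop the spurious $+3$ and your outline matches the paper exactly.
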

\begin{proof}
It is trivial for $d=1$. In the following we always assume that $d\ge 2$. Let $j_1$ and $j_2$ be the two integers with $0\le j_1,j_2\le q-2$. Denote by $N_{j_1,j_2}$ the number of elements $\gamma\in\fq^*\setminus\{\pm\beta\}$ satisfying $\beta+\gamma\in C_{j_1}$ and $\beta-\gamma\in C_{j_2}$, that is,
$$N_{j_1,j_2}=\#\{\gamma\in\fq^*\setminus\{\pm\beta\}:\ \beta+\gamma\in C_{j_1}\ \text{and}\ \beta-\gamma\in C_{j_2}\}.$$
So, it is sufficient to prove that $N_{j_1,j_2}>0$ when $d\le d_0$ which will be done in what follows. Let $\psi$ be a multiplicative character of $\fq$ of order $d$. By definition of cyclotomic classes, one then has that $$
C_j=\{y\in\fq^*:\ {\rm Ind}(y)\equiv j\pmod{d}\}=\{y\in\fq^*:\ \psi(yg^{-j})=1\}$$
for any integer $j$ with $0\le j\le d-1$. Note that for each $j$
$$\sum_{i=0}^{d-1}\psi^i(xg^{-j})=\begin{cases}
d,&x\in C_j\\
0,&x\in\fq^*\setminus C_j.
\end{cases}$$
It then follows that
\begin{align}\label{c2-5}
N_{j_1,j_2}&=\sum_{\gamma\in\fq^*\setminus\{\pm\beta\}}\frac{1}{d}\sum_{i_1=0}^{d-1}\psi^{i_1}((\beta+\gamma)g^{-j_1})\frac{1}{d}\sum_{i_2=0}^{d-1}\psi^{i_2}((\beta-\gamma)g^{-j_2})\nonumber\\
&=\frac{1}{d^2}\sum_{i_1,i_2=0}^{d-1}\sum_{\gamma\in\fq^*\setminus\{\pm\beta\}}\psi^{i_1}((\beta+\gamma)g^{-j_1})\psi^{i_2}((\beta-\gamma)g^{-j_2})\nonumber\\
&=\frac{1}{d^2}\sum_{i_1,i_2=0}^{d-1}S_{i_1,i_2}.
\end{align}
where
$$S_{i_1,i_2}:=\sum_{\gamma\in\fq^*\setminus\{\pm\beta\}}\psi^{i_1}((\beta+\gamma)g^{-j_1})\psi^{i_2}((\beta-\gamma)g^{-j_2}).$$
Now let us compute all sums $S_{i_1,i_2}$. First, one directly has that
\begin{align}\label{c2-6}
S_{0,0}=q-3.
\end{align} If $i_1\ne 0$ and $i_2=0$, one derives that
\begin{align}\label{c2-7}
\left|S_{i_1,0}\right|&=\Big|\sum_{\gamma\in\fq\setminus\{-\beta\}}\psi^{i_1}((\beta+\gamma)g^{-j_1})-\psi^{i_1}(\beta g^{-j_1})-\psi^{i_1}(2\beta g^{-j_1})\Big|\nonumber\\
&\le\Big|\sum_{\gamma\in\fq\setminus\{-\beta\}}\psi^{i_1}((\beta+\gamma)g^{-j_1})\Big|+\left|\psi^{i_1}(\beta g^{-j_1})\right|+\left|\psi^{i_1}(2\beta g^{-j_1})\right|\nonumber\\
&\le 2,
\end{align}
where the last inequality holds since $\sum_{\gamma\in\fq\setminus\{-\beta\}}\psi^{i_1}((\beta+\gamma)g^{-j_1})=0$, $\left|\psi^{i_1}(\beta g^{-j_1})\right|\le 1$ and $\left|\psi^{i_1}(2\beta g^{-j_1})\right|\le 1$.
If $i_1= 0$ and $i_2\ne 0$, similar to the case of $i_1\ne 0$ and $i_2=0$ we have 
\begin{align}\label{c2-8}
\left|S_{0,i_2}\right|\le 2.
\end{align}
Now let $i_1\ne 0$ and $i_2\ne 0$. So if let $\psi(0)=0$ as usual, then we have that
\begin{align*}
|S_{i_1,i_2}|&=\Big|\sum_{\gamma\in\fq^*\setminus\{\pm\beta\}}\psi((\beta+\gamma)^{i_1}(\beta-\gamma)^{i_2}g^{-i_1j_1-i_2j_2})\Big|\\
&=\Big|\sum_{\gamma\in\fq}\psi((\beta+\gamma)^{i_1}(\beta-\gamma)^{i_2}g^{-i_1j_1-i_2j_2})-\psi(\beta^{i_1+i_2}g^{-i_1j_1-i_2j_2})\Big|\\
&\le \Big|\sum_{\gamma\in\fq}\psi((\beta+\gamma)^{i_1}(\beta-\gamma)^{i_2}g^{-i_1j_1-i_2j_2})\Big|+1.
\end{align*}
It follows from the Weil bound (see \cite[Theorem 5.41]{[LN]}) that
\begin{align}\label{c2-9}
|S_{i_1,i_2}|\le 1+q^{1/2}.
\end{align}
Therefore, putting (\ref{c2-5}), (\ref{c2-6}), (\ref{c2-7}), (\ref{c2-8}) and (\ref{c2-9}) together and noting that $d\ge 2$ we have that
\begin{align}\label{c2-10}
|N_{j_1,j_2}-\frac{q-3}{d^2}|&=\frac{1}{d^2}\Big|\sum_{i_1=1}^{d-1}S_{i_1,0}+\sum_{i_2=1}^{d-1}S_{0,i_2}+\sum_{i_1,i_2=1}^{d-1}S_{i_1,i_2}\Big|\nonumber\\
&\le\frac{1}{d^2}\Big(\sum_{i_1=1}^{d-1}|S_{i_1,0}|+\sum_{i_2=1}^{d-1}|S_{0,i_2}|+\sum_{i_1,i_2=1}^{d-1}|S_{i_1,i_2}|\Big)\nonumber\\
&\le \frac{4(d-1)+(d-1)^2(1+q^{1/2})}{d^2}\nonumber\\
&<2+q^{1/2}.
\end{align}
Suppose that $N_{j_1,j_2}=0$. Using (\ref{c2-10}) and noting that $q\ge 5$, one has that 
$$d^2>\left(\frac{1-3/q}{1+2/q^{1/2}}\right)q^{1/2}\ge \left(2-\frac{4}{\sqrt{5}}\right)q^{1/2}=d_0^2,$$
which contradicts the condition $d\le d_0$. So we obtain that $N_{j_1,j_2}>0$. This completes the proof of Lemma \ref{lem2.5}.
\end{proof}

\section{Precise results of $S_{2\ell^m}(a,b)$}
Let the notations $p,\ell,m$ and $q$ be agreed in Section 2. Define the exponential sum $S_{2\ell^m}(a,b)$ over $\fq$ as in (\ref{c2-1-1}). In this section, we shall concentrate our efforts on computing $S_{2\ell^m}(a,b)$ for any $a\in\fth^*$ and $b\in\fq^*$ when $p=3$. In the following, we always assume that $g$ is the fixed primitive element of $\fq$ and $\zeta$ is the primitive cube root of unity. 

Let $J$ be the set of nonnegative integers less than $2\ell^m$, i.e., $J=\{j\in\mathbb{Z}:\ 0\le j\le 2\ell^m-1\}$. Define four subsets of $J$ by $J_1=\{j\in\mathbb{Z}:\ 1\le j\le (\ell-1)\ell^{m-1}\}$, $J_2=\{j\in\mathbb{Z}:\  (\ell-1)\ell^{m-1}< j\le\ell^{m}\}$, $J_3=\{j\in\mathbb{Z}:\  \ell^{m}< j\le 2\ell^{m}-\ell^{m-1}\}$ and $J_4=\{j\in\mathbb{Z}:\  2\ell^{m}-\ell^{m-1}< j\le\ 2\ell^{m}-1\}$. Define three subsets of $J_1$ by $J_1^{(1)}=\{j\in J_1:\ \ell^{m-1}\nmid j\}$, $J_1^{(2)}=\{j\in J_1:\  \ell^{m-1}\mid j\ \text{and}\ 2\nmid j\}$ and $J_1^{(3)}=\{j\in J_1:\  2\ell^{m-1}\mid j\}$. Clearly, $(\{0\},J_1,J_2,J_3,J_4)$ and $(J_1^{(1)},J_1^{(2)},J_1^{(3)})$ are partitions of $J$ and $J_1$, respectively.
One also finds that
$$J_3=\{\ell^m+k\ell^{m-1}-u:\ 1\le k\le \ell-1\ \text{and}\ 0\le u\le\ell^{m-1}-1\}.$$
So $J_3$ can be partitioned into $J_3=J_3^{(1)}\cup J_3^{(2)}\cup J_3^{(3)}$, where
$J_3^{(1)}=\{\ell^m+k\ell^{m-1}-u:\ 1\le k\le \ell-1\ \text{and}\ 0<u\le\ell^{m-1}-1\}$, $J_3^{(2)}=\{\ell^m+k\ell^{m-1}:\ 1\le k\le \ell-1\ \text{and}\ 2\mid k\}$,  and $J_3^{(3)}=\{\ell^m+k\ell^{m-1}:\ 1\le k\le \ell-1\ \text{and}\ 2\nmid k\}$.

Now we are able to state the result of this section.
\begin{thm}\label{thm3.1.1}
Let $b\in\fq^*$ and write $b=g^{-i}$ for some integer $i$ with $0\le i\le q-2$. Let $j$ be the least nonnegative residue of $i$ modulo $2\ell^m$. Define two notation functions $h(x)$ and $H(x)$ by
$$h(x)=-x+\frac{(\sqrt{q}+1)(2\ell^mx-2\ell^m+3\ell-3)}{2\ell^m}$$
and
$$H(x)=-x^{(-1)^m}+\frac{(\sqrt{q}+1)(2\ell^{m-1}x^{(-1)^m}-2\ell^{m-1}+3)}{2\ell^{m-1}}.$$
Then the following statements are true.
\begin{enumerate}[(a)]
\item Let $a\in\fq^*$. Then 
$$S_{2\ell^m}(a,b)=S_{2\ell^m}(a,2b).$$
Furthermore, if $S_{2\ell^m}(a,b)$ is seen formally as a function of $\zeta$ denoted by $S_{2\ell^m}(a,b;\zeta)$, then 
$$S_{2\ell^m}(2a,b)=S_{2\ell^m}(a,b;\zeta^{-1}).$$
\item When $\ell\equiv 1\pmod{3}$, we have that
$$S_{2\ell^m}(1,b)=
\begin{cases}
-1+\frac{(\sqrt{q}+1)(3\ell-3)}{2\ell^m},&\text{if}\ j\in \{0\}\cup J_1^{(1)}\cup J_2\cup J_3^{(1)}\cup J_4,\\
h(\zeta),&\text{if}\ j\in J_1^{(2)}\cup J_3^{(2)},\\
h(\zeta^{-1}),&\text{if}\ j\in J_1^{(3)}\cup J_3^{(3)}.
\end{cases}$$
\item When $\ell\equiv -1\pmod{3}$, we have that
$$S_{2\ell^m}(1,b)=
\begin{cases}
-1+\frac{3(\sqrt{q}+1)}{2\ell^{m-1}},&\text{if}\ j\in  J_1^{(1)}\cup \left(J_2\setminus \{\ell^m\}\right)\cup J_3^{(1)}\cup J_4,\\
H(\zeta),&\text{if}\ j\in J_1^{(3)}\cup \{\ell^m\}\cup J_3^{(3)},\\
H(\zeta^{-1}),&\text{if}\ j\in \{0\}\cup J_1^{(2)}\cup J_3^{(2)}.
\end{cases}$$
\end{enumerate}
Consequently, the values of $S_{2\ell^m}(x,yb)$ are obtained for any $x,y\in\fth^*$.
\end{thm}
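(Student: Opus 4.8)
The plan is to reduce everything to the two tools already in hand: Lemma \ref{lem2.1}, which expresses $S_{2\ell^m}(a,b)$ in terms of $\chi(ab^{-(q-1)/N})$ and $S_{2\ell^m}(ab^{-(q-1)/N})$ (here $N=2\ell^m$), and Lemma \ref{lem2.2} together with Corollary \ref{cor2.3}, which evaluate $S_{2\ell^m}$ at explicit arguments. Since $b=g^{-i}$, we have $b^{-(q-1)/N}=g^{i(q-1)/(2\ell^m)}=\xi^{i}=\xi^{j}$, where $j$ is the least nonnegative residue of $i$ modulo $2\ell^m$; this is the reduction that makes the finite case-split by $j\in J_1^{(k)},J_2,J_3^{(k)},J_4$ possible. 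For part (a), the identity $S_{2\ell^m}(a,b)=S_{2\ell^m}(a,2b)$ should follow because $2=-1$ in $\fth$ and replacing $b$ by $-b$ sends $\xi^{j}$ to $-\xi^{j}=\xi^{j+\ell^m}$; one then checks via Lemma \ref{lem2.4} that the trace values, hence $\chi(a\xi^{j})$ and $S_{2\ell^m}(a\xi^{j})$, are unchanged (the map $j\mapsto j+\ell^m$ permutes the relevant $J$-pieces while preserving the divisibility pattern that governs both $\mathrm{Tr}(\xi^j)$ and the structure in Lemma \ref{lem2.2}). The second identity in (a), $S_{2\ell^m}(2a,b)=S_{2\ell^m}(a,b;\zeta^{-1})$, is just the observation that replacing $a$ by $-a$ negates $\mathrm{Tr}(a\xi^j)$ and, in Lemma \ref{lem2.2}, negates every $\Delta_i$ and inverts every $\delta_{i,t}$, which is precisely the substitution $\zeta\mapsto\zeta^{-1}$ throughout.

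For parts (b) and (c), the core computation is: for $a=1$, evaluate $\chi(\xi^{j})=\zeta^{\mathrm{Tr}(\xi^j)}$ using Lemma \ref{lem2.4}, and evaluate $S_{2\ell^m}(\xi^{j})$ using Lemma \ref{lem2.2}. The point is that $\xi^{j}=\sum_i a_i\xi^i$ is (almost) a single basis vector, so the coefficient vectors $a^{(j')}$ are mostly zero and the double sum in Lemma \ref{lem2.2} collapses. I would organize the work by which $J$-piece $j$ lies in: when $\ell^{m-1}\nmid j$ (i.e.\ $j\in J_1^{(1)}$, and analogously for $J_3^{(1)}$ after using \eqref{c2-3} to rewrite $\xi^{j}$ when $j>\ell^m$), $\mathrm{Tr}(\xi^j)=0$ and $S_{2\ell^m}(\xi^j)$ takes its "generic" value, giving the constant $-1+(\sqrt q+1)(3\ell-3)/(2\ell^m)$ after substituting into Lemma \ref{lem2.1}; when $\ell^{m-1}\mid j$ but $2\ell^{m-1}\nmid j$ (the $J_1^{(2)}$/$J_3^{(2)}$ type, or their negatives), $\xi^j$ contributes a nontrivial trace $\pm\ell^{m-1}$ and $S_{2\ell^m}(\xi^j)$ picks up a genuine $\zeta$-dependence, producing $h(\zeta)$ resp.\ $H(\zeta)$; and the $2\ell^{m-1}\mid j$ cases give the conjugate value. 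The split between $\ell\equiv 1$ and $\ell\equiv -1\pmod 3$ enters only because $\zeta^{\ell^{m-1}(\ell-1)}$ and $\zeta^{\ell^{m}}$ depend on $\ell^m,\ell^{m-1}\bmod 3$, which is why the roles of the three output values get permuted and why the exponent $(-1)^m$ appears in $H$. The final sentence of the theorem is then immediate: any $x\in\fth^*$ is $1$ or $2$, so $S_{2\ell^m}(x,yb)$ with $y\in\fth^*$ is obtained from $S_{2\ell^m}(1,b)$ by combining the scaling $b\mapsto yb$ (which just shifts $i$, hence $j$) with part (a).

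I expect the main obstacle to be bookkeeping rather than conceptual: one must carefully track, for $j$ in each sub-piece of $J_3$, how the relation $1=\sum_{k=1}^{\ell-1}(-1)^{k-1}\xi^{k\ell^{m-1}}$ rewrites $\xi^{j}$ (for $\ell^m<j<2\ell^m$) as an $\fth$-combination of the basis $\{\xi,\dots,\xi^e\}$, then read off the coefficient vectors $a^{(j')}$, feed them into the formulas for $\Delta_i$ and $\delta_{i,t}$, and simplify — all while keeping the two congruence classes of $\ell$ straight and making sure the $J_2$/$J_4$ boundary indices ($j=\ell^m$, $j=0$) land in the correct row of the case analysis. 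A secondary care point is verifying that the "generic" value of $S_{2\ell^m}(\xi^j)$ for $\ell^{m-1}\nmid j$ is exactly $2\ell^m-3\ell+3$ (resp.\ $2\ell^m-3\ell$) as in \eqref{c2-2}, i.e.\ that the sporadic nonzero trace contributions in Lemma \ref{lem2.2} for a single-basis-vector argument vanish except in the divisible cases; this is where Lemma \ref{lem2.4}'s five-way split is used in full, and it is worth isolating as a preliminary computation before entering the case analysis.
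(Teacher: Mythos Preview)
Your approach to parts (b) and (c) is essentially the paper's: reduce via Lemma~\ref{lem2.1} to computing $\chi(\xi^j)$ and $S_{2\ell^m}(\xi^j)$, express $\xi^j$ in the basis $\{\xi,\dots,\xi^e\}$ (using \eqref{c2-3} when $j>e$), then case-split on $j$ using Lemma~\ref{lem2.4} and Lemma~\ref{lem2.2}. That part is fine.

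However, your argument for the first identity in (a) contains a genuine error. You claim that replacing $b$ by $2b=-b$ sends $\xi^j$ to $-\xi^j=\xi^{j+\ell^m}$, and that the map $j\mapsto j+\ell^m$ preserves the trace values governing $\chi$ and $S_{2\ell^m}$. Both claims are false. First, since $e=\phi(2\ell^m)=(\ell-1)\ell^{m-1}$ is even, $q=3^e\equiv 1\pmod 4$; combined with $2\ell^m\mid q-1$ and $\gcd(4,\ell^m)=1$ this gives $4\ell^m\mid q-1$, so $(q-1)/(2\ell^m)$ is \emph{even}. Hence $(2b)^{-(q-1)/(2\ell^m)}=(-1)^{(q-1)/(2\ell^m)}b^{-(q-1)/(2\ell^m)}=b^{-(q-1)/(2\ell^m)}$, i.e.\ $c_{a,2b}=c_{a,b}$ exactly, not $-c_{a,b}$. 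Second, even if the shift $j\mapsto j+\ell^m$ did occur, Lemma~\ref{lem2.4} (or just linearity of trace) gives $\mathrm{Tr}(\xi^{j+\ell^m})=\mathrm{Tr}(-\xi^j)=-\mathrm{Tr}(\xi^j)$, so $\chi$ would conjugate rather than be preserved. The paper's argument is precisely the one-line parity observation above, after which the identity is immediate.

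One smaller point on (b)/(c): you write that in the divisible cases ``$S_{2\ell^m}(\xi^j)$ picks up a genuine $\zeta$-dependence''. In fact $S_{2\ell^m}(\xi^j)$ is always real (evident from the symmetric form in Lemma~\ref{lem2.2}) and, as the paper's Cases~2--5 show, takes the \emph{same} constant value \eqref{c2-2} for every $j\in J$. All of the $\zeta$-dependence in $h(\zeta)$ and $H(\zeta)$ comes from the term $\chi(\xi^j)\sqrt{q}$ in Lemma~\ref{lem2.1}. Keeping this straight will streamline your bookkeeping considerably.
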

\begin{proof}
Let $a,b\in\fq^*$. From Lemma \ref{lem2.1}, we know that
\begin{align}\label{c3-6-1}
S_{2\ell^m}(a,b)=\chi(c_{a,b})\sqrt{q}-\frac{\sqrt{q}+1}{2\ell^m}
S_{2\ell^m}(c_{a,b}),
\end{align}
where $c_{a,b}:=ab^{-\frac{q-1}{2\ell^m}}$. Note that $\phi(2\ell^m)$ is even so that $q\equiv 1\pmod{4}$. It implies that $\frac{q-1}{2\ell^m}$ is even. One then has $c_{a,2b}=a(2b)^{-\frac{q-1}{2\ell^m}}=ab^{-\frac{q-1}{2\ell^m}}=c_{a,b}$ and $c_{2a,b}=2ab^{-\frac{q-1}{2\ell^m}}=-c_{1,1}$. It follows from the formulae (\ref{c3-6-1}) that $S_{2\ell^m}(a,2b)=S_{2\ell^m}(a,b)$ and $S_{2\ell^m}(2a,b)=S_{2\ell^m}(a,b;\zeta^{-1})$. Part (a) is proved.

In the sequel, we are going to compute $S_{2\ell^m}(1,b)$. For the purpose, let $b=g^{-i}$ for some integer $i$ with $0\le i\le q-2$ and let $j$ be the least nonnegative residue of $i$ modulo $2\ell^m$. Recalling that $\xi=g^{\frac{q-1}{2\ell^m}}$, one has $c_{1,b}=b^{-\frac{q-1}{2\ell^m}}=(g^{\frac{q-1}{2\ell^m}})^i=\xi^i=\xi^j$.
Note that $j\in J$ and $J$ has a partition $(\{0\},J_1,J_2,J_3,J_4)$. So we have the following cases.

{\sc Case 1}. $j=0$. In this case, $c_{1,1}=\xi^j=1$. So
$\chi(c_{1,1})=\chi(1)=\zeta^{{\rm Tr}(1)}=\zeta^{\phi(2\ell^m)}$, i.e.,
\begin{align}\label{c3-7-1}
\chi(c_{1,1})=\begin{cases}
1,&\text{if}\ \ell\equiv 1\pmod{3},\\
\zeta^{(-1)^{m-1}},&\text{if}\ \ell\equiv -1\pmod{3}.
\end{cases}
\end{align}
Then substituting (\ref{c3-7-1}) and (\ref{c2-2}) into (\ref{c3-6-1}) yields that
\begin{align*}
S_{2\ell^m}(1,b)=\begin{cases}
-1+\frac{(\sqrt{q}+1)(3\ell-3)}{2\ell^m},&\text{if}\ \ell\equiv 1\pmod{3},\\
-\zeta^{(-1)^{m-1}}+\frac{(\sqrt{q}+1)(2\ell^m\zeta^{(-1)^{m-1}}-2\ell^m+3\ell
)}{2\ell^m},&\text{if}\ \ell\equiv -1\pmod{3}.
\end{cases}
\end{align*}

{\sc Case 2}. $j\in J_1$. By the Euclidean division there exists a unique integer pair $(k,u)$ with $1\le k\le \ell-1$ and $0\le u\le\ell^{m-1}-1$ such that $r=k\ell^{m-1}-u$. Let
$c_{1,b}=\sum_{l=1}^{(\ell-1)\ell^{m-1}}a_l\xi^l$
with each $a_l\in\fth$. Since $c_{1,b}=\xi^j$, then by the uniqueness of the representation, one gets that
\begin{align}\label{c3-8-1}
c_{1,b}^{(u)}=(\underbrace{0,\ldots,0,1,0,\ldots,0}_{\ell-1}),\ \text{and}\ c_{1,b}^{(h)}=(\underbrace{0,0,\ldots,0}_{\ell-1})
\end{align}
for all $h$ with $h\ne u$, where the only one $1$ in the components of $c_{1,b}^{(i_0)}$ is located in the $k$-th position of this vector. Hence, substituting (\ref{c3-8-1}) into Lemma \ref{lem2.2}, one computes that
\begin{align}\label{c3-9-1}
S_{2\ell^m}(c_{1,b})=\begin{cases}
2\ell^m-3\ell+3,&\text{if}\ \ell\equiv 1\pmod{3},\\
2\ell^m-3\ell,&\text{if}\ \ell\equiv -1\pmod{3},\\
\end{cases}
\end{align}
On the other hand, Lemma \ref{lem2.4} gives that
\begin{align}\label{c3-10-1}
\chi(c_{1,b})=\begin{cases}
\zeta^2\ \text{if}\ \ell\equiv 1\pmod{3},\ \text{or}\ \zeta^{(-1)^m} \ \text{otherwise},&j\in J_1^{(3)},\\
\zeta\ \text{if}\ \ell\equiv 1\pmod{3},\ \text{or}\ \zeta^{(-1)^{m-1}} \ \text{otherwise},&j\in J_1^{(2)},\\
1,&j\in J_1^{(1)}.
\end{cases}
\end{align}
Putting Equations (\ref{c3-6-1}), (\ref{c3-9-1}) and (\ref{c3-10-1}) together, we conclude that
\begin{align}\label{c3-11-1}
S_{2\ell^m}(1,b)=\begin{cases}
-\zeta^2+\frac{(\sqrt{q}+1)(2\ell^m\zeta^2-2\ell^m+3\ell-3)}{2\ell^m},&\text{if}\ \ell\equiv 1\pmod{3},\\
-\zeta^{(-1)^m}+\frac{(\sqrt{q}+1)(2\ell^m\zeta^{(-1)^m}-2\ell^m+3\ell)}{2\ell^m},&\text{if}\ \ell\equiv -1\pmod{3}
\end{cases}
\end{align}
when $j\in J_1^{(3)}$,
\begin{align}\label{c3-12-1}
S_{2\ell^m}(1,b)=\begin{cases}
-\zeta+\frac{(\sqrt{q}+1)(2\ell^m\zeta-2\ell^m+3\ell-3)}{2\ell^m},&\text{if}\ \ell\equiv 1\pmod{3},\\
-\zeta^{(-1)^{m-1}}+\frac{(\sqrt{q}+1)(2\ell^m\zeta^{(-1)^{m-1}}-2\ell^m+3\ell)}{2\ell^m},&\text{if}\ \ell\equiv -1\pmod{3}
\end{cases}
\end{align}
when $j\in J_1^{(2)}$, and
\begin{align}\label{c3-13-1}
S_{2\ell^m}(1,b)=\begin{cases}
-1+\frac{(\sqrt{q}+1)(3\ell-3)}{2\ell^m},&\text{if}\ \ell\equiv 1\pmod{3},\\
-1+\frac{3\ell(\sqrt{q}+1)}{2\ell^m},&\text{if}\ \ell\equiv -1\pmod{3}
\end{cases}
\end{align}
when $j\in J_1^{(1)}$.

{\sc Case 3}. $j\in J_2$. Write $j=\ell^{m}-u$ for some integer $u$ with $0\le u\le \ell^{m-1}-1$. By (\ref{c2-3}), we have
$\xi^{\ell^{m-1}(\ell-1)}=\sum_{k=0}^{\ell-2}(-1)^{k-1}\xi^{k\ell^{m-1}}$. It implies that
$$\xi^j=\xi^{\ell^m-u}
=\sum_{k=0}^{\ell-2}(-1)^{k-1}\xi^{(k+1)\ell^{m-1}-u}
=\sum_{k=1}^{\ell-1}(-1)^{k}\xi^{k\ell^{m-1}-u}.$$
So, we have that
\begin{align*}
c_{1,b}^{(u)}=(\underbrace{-1,1,\ldots,-1,1}_{\ell-1}),\ \text{and}\ c_{1,b}^{(h)}=(\underbrace{0,0,\ldots,0}_{\ell-1})
\end{align*}
is an all zero sub-vector for any $h\ne u$. By Lemma \ref{lem2.2} we then easily obtain $S_{2\ell^m}(c_{1,b})$, in fact, which has the same values as (\ref{c3-9-1}). Note that $(\ell-1)\ell^{m-1}<j\le \ell^m$. It follows from Lemma \ref{lem2.4} that $\chi(c_{1,b})=1$ if $j\ne \ell^m$, and $\chi(c_{1,b})=\zeta^{-(\ell-1)\ell^{m-1}}$ if $j=\ell^m$. Therefore, by (\ref{c3-6-1}) we have that
\begin{align}\label{c3-14-1}
S_{2\ell^m}(1,b)=\begin{cases}
-1+\frac{(\sqrt{q}+1)(3\ell-3)}{2\ell^m},&\text{if}\ \ell\equiv 1\pmod{3},\\
-1+\frac{3\ell(\sqrt{q}+1)}{2\ell^m},&\text{if}\ \ell\equiv -1\pmod{3}.
\end{cases}
\end{align}
when $j\ne \ell^m$, and
\begin{align*}
S_{2\ell^m}(1,b)=\begin{cases}
-1+\frac{(\sqrt{q}+1)(3\ell-3)}{2\ell^m},&\text{if}\ \ell\equiv 1\pmod{3},\\
-\zeta^{(-1)^{m}}+\frac{(\sqrt{q}+1)(2\ell^m\zeta^{(-1)^{m}}-2\ell^m+3\ell
)}{2\ell^m},&\text{if}\ \ell\equiv -1\pmod{3}.
\end{cases}
\end{align*}
when $j=\ell^m$.

{\sc Case 4}. $j\in J_3$. First, one partitions $J_3$ into $J_3=\bigcup_{k=1}^{\ell-1}I_k$ with each $I_k=\{i\in\mathbb{Z}:\ \ell^m+(k-1)\ell^{m-1}<i\le \ell^m+k\ell^{m-1}\}$. Then there exists a unique integer $k_0$ with $0\le k_0\le \ell-1$ such that $j\in I_{k_0}$. Write $j=\ell^m+k_0\ell^{m-1}-u_0$ for some integer $u_0$ with $0\le u_0\le \ell^{m-1}-1$. Note that $\xi^{\ell^m}=-1$ since $\xi$ is a primitive $2\ell^m$-th root of unity. It follows that $\xi^j=\xi^{\ell^m}\cdot\xi^{k_0\ell^{m-1}-u_0}=-\xi^{k_0\ell^{m-1}-u_0}$.
Thus,
\begin{align}\label{c3-15-1}
c_{1,b}^{(u_0)}=(\underbrace{0,\ldots,0,-1,0,\ldots,0}_{\ell-1}),\ \text{and}\ c_{1,b}^{(h)}=(\underbrace{0,0,\ldots,0}_{\ell-1})
\end{align}
is an all-zero subvector for any $h$ with $h\ne u_0$, where the only one $-1$ in the components of $c_{1,b}^{(u_0)}$ is located in the $k_0$-th position of this vector. Hence, substituting (\ref{c3-15-1}) into Lemma \ref{lem2.2}, one derives $S_{2\ell^m}(c_{1,b})$, which shares the same value with (\ref{c3-9-1}). Also, by Lemma \ref{lem2.4} one deduces that ${\rm Tr}(\xi^j)$ is equal to $0$ if $j\in J_3^{(1)}$, $-\ell^{m-1}$ if $j\in J_3^{(3)}$, and $\ell^{m-1}$ if $j\in J_3^{(2)}$. Then
\begin{align}\label{c3-16-1}
\chi(c_{1,b})=\begin{cases}
1,&\text{if}\ j\in J_3^{(1)},\\
\zeta^{-\ell^{m-1}},&\text{if}\ j\in J_3^{(3)},\\
\zeta^{\ell^{m-1}},&\text{if}\ j\in J_3^{(2)}.
\end{cases}
\end{align}
Putting the values of $S_{2\ell^m}(c_{1,b})$ and $\chi(c_{1,b})$ into (\ref{c3-6-1}), we obtain that $S_{2\ell^m}(1,b)$ shares the same value with (\ref{c3-13-1}) if $j\in J_3^{(1)}$, shares the same value with (\ref{c3-11-1}) if $j\in J_3^{(3)}$, and shares the same value with (\ref{c3-12-1}) if $j\in J_3^{(2)}$.

{\sc Case 5}. $j\in J_4$. Write $j=2\ell^m-u$ for some integer $u$ with $1\le u\le\ell^{m-1}-1$. Note that $\xi^j=\xi^{\ell^m}\cdot\xi^{\ell^m-u}=-\xi^{\ell^m-u}$. Then from Case 3 we know that
$$\xi^j=\sum_{k=1}^{\ell-1}(-1)^{k-1}\xi^{k\ell^{m-1}-u},$$
which is equivalent to that
\begin{align*}
c_{1,b}^{(u)}=(\underbrace{1,-1,\ldots,1,-1}_{\ell-1}),\ \text{and}\ c_{1,b}^{(h)}=(\underbrace{0,0,\ldots,0}_{\ell-1})
\end{align*}
is an all zero sub-vector for any $h\ne u$. This together with Lemma \ref{lem2.2} gives $S_{2\ell^m}(c_{1,b})$ which has the same value as (\ref{c3-9-1}). Further, by Lemma \ref{lem2.4} one sees that $\chi(c_{1,b})=1$. So we obtain $S_{2\ell^m}(1,b)$, in fact, which shares the same value with (\ref{c3-14-1}) in this case.

Combining all cases above, the desired results follow. This completes the proof of Theorem \ref{thm3.1.1}.
\end{proof}
For any $\alpha\in\fth$ and $\beta\in\fq$, define a key notation $w(\alpha,\beta)$ by
\begin{align}\label{c3-12}
w(\alpha,\beta):=\zeta^{-\alpha}S_{2\ell^m}(1,\beta)+\zeta^{\alpha}S_{2\ell^m}(2,2\beta). 
\end{align}
Let ${\rm Ind}(\cdot)$ be the index of a nonzero element of $\fq$ with respect to the primitive element $g$. Let $J'=\{0\}\cup J_1^{(1)}\cup J_2\cup J_3^{(1)}\cup J_4$ and $J''=J_1^{(1)}\cup (J_2\{\ell^m\})\cup J_3^{(1)}\cup J_4$. For $\alpha\in\fth\setminus\{0\}$ and $\beta\in\fq\setminus\{0\}$, one can write $\alpha=(-1)^{T_{\alpha}}$ with $T_{\alpha}\in\{0,1\}$ and let $j_{\beta}$ be the least nonnegative residue of $-{\rm Ind}(\beta)$ modulo $2\ell^m$. Then by Theorem \ref{thm3.1.1} we immediately obtain the values of $w(\alpha,\beta)$ as follows.  
\begin{cor}\label{cor3.2}
For any $\alpha\in\fth$ and $\beta\in\fq$, we have $w(\alpha,\beta)=w(\alpha,2\beta)$. Further, 
\begin{enumerate}[(a)]
\item when $\ell\equiv 1\pmod{3}$, $w(\alpha,\beta)$ is equal to $\frac{(2\ell^m-3\ell+3)(q-1)}{\ell^m}$ if $\alpha=\beta=0$, $\frac{(2\ell^m-3\ell+3)(1-q)}{2\ell^m}$ if $\alpha\ne 0$ and $\beta=0$, $\frac{(3\ell-3)(\sqrt{q}+1)}{\ell^m}-2$ if $\alpha=0$ and $\beta\ne 0$ with $j_{\beta}\in J'$, $\frac{3(-\ell^m+\ell-1)(\sqrt{q}+1)}{\ell^m}+1$ if $\alpha=0$ and $\beta\ne 0$ with $j_{\beta}\in J\setminus J'$, $\frac{(3-3\ell)(\sqrt{q}+1)}{2\ell^m}+1$ if $\alpha\ne 0$ and $\beta\ne 0$ with $(T_{\alpha},j_{\beta})$ belonging to one of the sets $\{0,1\}\times J'$, $\{0\}\times (J_1^{(3)}\cup J_3^{(3)})$ and $\{1\}\times (J_1^{(2)}\cup J_3^{(2)})$, and $\frac{(3-3\ell)(\sqrt{q}+1)}{2\ell^m}+3\sqrt{q}+1$ if $\alpha\ne 0$ and $\beta\ne 0$ with $(T_{\alpha},j_{\beta})$ belonging to one of the sets $\{1\}\times (J_1^{(3)}\cup J_3^{(3)})$ and $\{0\}\times (J_1^{(2)}\cup J_3^{(2)})$; and
\item  when $\ell\equiv -1\pmod{3}$, $w(\alpha,\beta)$ is equal to $\frac{(2\ell^{m-1}-3)(q-1)}{\ell^{m-1}}$ if $\alpha=\beta=0$, $\frac{(-2\ell^{m-1}-3)(q-1)}{2\ell^{m-1}}$ if $\alpha\ne 0$ and $\beta=0$, $\frac{3(\sqrt{q}+1)}{\ell^{m-1}}-2$ if $\alpha=0$ and $\beta\ne 0$ with $j_{\beta}\in J''$, $\frac{(-3\ell^{m-1}+3)(\sqrt{q}+1)}{\ell^{m-1}}+1$ if $\alpha=0$ and $\beta\ne 0$ with $j_{\beta}\in J\setminus J'$, $\frac{-3(\sqrt{q}+1)}{2\ell^{m-1}}+2$ if $\alpha\ne 0$ and $\beta\ne 0$ with $(m+T_{\alpha},j_{\beta})$ being in one of the sets $\mathbb{Z}\times J''$, $(2\mathbb{Z})\times (\{0\}\cup J_1^{(2)}\cup J_3^{(2)})$ and $(2\mathbb{Z}+1)\times (\{\ell^m\}\cup J_1^{(3)}\cup J_3^{(3)})$, and $\frac{(6\ell^{m-1}-3)(\sqrt{q}+1)}{2\ell^{m-1}}-4$ if $\alpha\ne 0$ and $\beta\ne 0$ with $(T_{\alpha},j_{\beta})$ belonging to one of the sets $(m+T_{\alpha},j_{\beta})$ is in $(2\mathbb{Z}+1)\times (\{0\}\cup J_1^{(2)}\cup J_3^{(2)})$ or in $(2\mathbb{Z})\times (\{\ell^m\}\cup J_1^{(3)}\cup J_3^{(3)})$. Here $2\mathbb{Z}$ represents the set of even integers and $2\mathbb{Z}+1$ the set of odd integers.
\end{enumerate}
\end{cor}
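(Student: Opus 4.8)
The plan is to deduce everything from Theorem \ref{thm3.1.1} by a bookkeeping argument. First I would handle the symmetry $w(\alpha,\beta)=w(\alpha,2\beta)$: since ${\rm Ind}(2\beta)$ and ${\rm Ind}(\beta)$ differ by ${\rm Ind}(2)$, which one checks is $\frac{q-1}{2}$ (a multiple of $\ell^m$ but not of $2\ell^m$), the least residue $j_{\beta}$ is replaced by $j_{\beta}+\ell^m \pmod{2\ell^m}$ under $\beta\mapsto 2\beta$; a direct inspection of the partition shows this permutation fixes each of the defining sets $\{0\}\cup J_2$, $J_1^{(1)}\cup J_3^{(1)}$, $J_4$, $J_1^{(2)}\cup J_3^{(2)}$, $J_1^{(3)}\cup J_3^{(3)}$ setwise (it swaps the pieces $J_1^{(\cdot)}\leftrightarrow J_3^{(\cdot)}$ etc.), so $S_{2\ell^m}(1,\beta)=S_{2\ell^m}(1,2\beta)$, and then part (a) of Theorem \ref{thm3.1.1} gives $w(\alpha,\beta)=w(\alpha,2\beta)$ immediately. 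Alternatively, and more cheaply, one uses $S_{2\ell^m}(a,b)=S_{2\ell^m}(a,2b)$ from Theorem \ref{thm3.1.1}(a) applied to each summand of \eqref{c3-12} directly.

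Next I would treat the degenerate cases. For $\alpha=\beta=0$ one has $S_{2\ell^m}(1,0)=\frac{q-1}{2\ell^m}S_{2\ell^m}(1)$ by \eqref{c2-1}, and $S_{2\ell^m}(1)$ is given by \eqref{c2-2}; since $S_{2\ell^m}(2,0)=S_{2\ell^m}(1,0)$ and $\zeta^{0}=1$, we get $w(0,0)=2\cdot\frac{q-1}{2\ell^m}S_{2\ell^m}(1)$, which simplifies to the stated value in each congruence class of $\ell\bmod 3$. For $\alpha\ne 0,\beta=0$ the two terms of \eqref{c3-12} carry opposite-looking powers $\zeta^{\mp\alpha}$ but $S_{2\ell^m}(1,0)$ is a rational integer, so $w(\alpha,0)=(\zeta^{-\alpha}+\zeta^{\alpha})S_{2\ell^m}(1,0)=-S_{2\ell^m}(1,0)$, giving the claimed expressions. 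For $\alpha=0,\beta\ne0$, $w(0,\beta)=S_{2\ell^m}(1,\beta)+S_{2\ell^m}(2,2\beta)$; by Theorem \ref{thm3.1.1}(a) the second summand equals $S_{2\ell^m}(1,2\beta;\zeta^{-1})=S_{2\ell^m}(1,\beta;\zeta^{-1})$, so $w(0,\beta)$ is $S_{2\ell^m}(1,\beta)+\overline{S_{2\ell^m}(1,\beta)}$, i.e. twice the real part; feeding in the three branches of Theorem \ref{thm3.1.1}(b) or (c) and using $\zeta+\zeta^{-1}=-1$ yields the two subcases $j_\beta\in J'$ (resp. $J''$) versus $j_\beta\notin J'$.

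The main case is $\alpha\ne0$, $\beta\ne0$. Here $w(\alpha,\beta)=\zeta^{-\alpha}S_{2\ell^m}(1,\beta)+\zeta^{\alpha}S_{2\ell^m}(2,2\beta)=\zeta^{-\alpha}S_{2\ell^m}(1,\beta)+\zeta^{\alpha}\,\overline{S_{2\ell^m}(1,\beta)}$, again a real number. Writing $\alpha=(-1)^{T_\alpha}$, so $\zeta^{-\alpha}=\overline{\zeta^{\alpha}}$ with $\zeta^{\alpha}\in\{\zeta,\zeta^{-1}\}$ according to the parity of $T_\alpha$, one substitutes the relevant branch of Theorem \ref{thm3.1.1}(b)/(c) for $S_{2\ell^m}(1,\beta)$ and then just multiplies out. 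The shape of $S_{2\ell^m}(1,\beta)$ is $-\zeta^{s}+\frac{(\sqrt q+1)(2\ell^m\zeta^{s}+c)}{2\ell^m}$ (with $s\in\{0,1,-1\}$ depending on $j_\beta$, and similarly for the $\ell\equiv-1$ case with $\ell^{m-1}$ and the twist $x^{(-1)^m}$), so $\zeta^{-\alpha}S_{2\ell^m}(1,\beta)$ produces terms $-\zeta^{s-\alpha}$, $(\sqrt q+1)\zeta^{s-\alpha}$, $\frac{(\sqrt q+1)c}{2\ell^m}\zeta^{-\alpha}$, and adding the conjugate collapses every $\zeta^{\pm1}$ to $-1$ and every $\zeta^{0}$ to $2$. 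The case split in the statement is exactly the split according to whether $s-\alpha\equiv 0$ or $\not\equiv 0\pmod 3$ — i.e. whether the exponent of $\zeta$ in the leading $\sqrt q$-term survives as $2\sqrt q$ or cancels to $-\sqrt q$ — which accounts for the $3\sqrt q$ (resp. $\pm$) discrepancy between the last two alternatives; the bookkeeping of which pairs $(T_\alpha,j_\beta)$ land in which class, including the extra $(-1)^m$ twist appearing through $H(x)$ when $\ell\equiv-1\pmod 3$, is the one genuinely tedious part and the only place an error is likely. I expect no conceptual obstacle: the entire corollary is a mechanical consequence of Theorem \ref{thm3.1.1} together with $\zeta+\zeta^{-1}=-1$ and the observation that $w(\alpha,\beta)$ is always $z+\bar z$ for the relevant $z$.
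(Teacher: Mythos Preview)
Your proposal is correct and follows exactly the approach the paper takes: the paper presents Corollary~\ref{cor3.2} with no proof beyond the sentence ``Then by Theorem~\ref{thm3.1.1} we immediately obtain the values of $w(\alpha,\beta)$,'' so your plan of substituting the branches of Theorem~\ref{thm3.1.1} into \eqref{c3-12} and simplifying via $\zeta+\zeta^{-1}=-1$ is precisely the intended derivation. Your organizing observation that $w(\alpha,\beta)=z+\bar z$ (using $S_{2\ell^m}(2,2\beta)=\overline{S_{2\ell^m}(1,\beta)}$ from Theorem~\ref{thm3.1.1}(a)) is a clean way to structure the casework, and your identification of the $(T_\alpha,j_\beta)$ bookkeeping as the only error-prone step is accurate.
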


\section{A ternary code and its weight distribution}
In this section, we construct a ternary code and determine its Hamming weight distribution. Let $m$ be a positive integer. Let $\ell$ be an odd prime such that $3$ is a primitive root modulo $2\ell^m$. Let $q=3^{e}$ with $e=\phi(2\ell^m)$, and Tr be the trace from $\mathbb{F}_q$ onto $\mathbb{F}_3$. For any given $\alpha\in\fth$ and $\beta\in\fq$, we shall investigate the Hamming weight distribution of the linear code $\mathcal{C}_{\alpha,\beta}$ over $\fth$ defined by 
\begin{align*}
\mathcal{C}_{\alpha,\beta}:=\{({\rm Tr}(d_1x),{\rm Tr}(d_2x),\ldots,{\rm Tr}(d_nx)):\ x\in\mathbb{F}_q\}.
\end{align*}
with the defining set \begin{align}\label{c3-1}
D=\{x\in\mathbb{F}_q^*: \Tr(x^{\frac{q-1}{2\ell^m}}+\beta x)=\alpha\},
\end{align}
where $n=\#D$. Let $\mathcal{C}_{\alpha,\beta}^{\perp}$ be the dual code of $\mathcal{C}_{\alpha,\beta}$, i.e.,
$$\mathcal{C}_{\alpha,\beta}^{\perp}=\{x\in\fth^n:\ \langle x,c\rangle=0,\ \forall c\in\mathcal{C}_{\alpha,\beta}\},$$
where $\langle x,c\rangle$ denotes the Euclidean inner product of $x$ and $c$.
First, let us determine the minimum distance of $\mathcal{C}_{\alpha,\beta}^{\perp}$ and the number of codewords of $\mathcal{C}_{\alpha,\beta}^{\perp}$ with the minimum Hamming weight.
\begin{prop}\label{prop4.1}
For any $\alpha\in\fth$ and $\beta\in\fq^*$, we have that $\mathcal{C}_{\alpha,\beta}^{\perp}$ is of minimum distance $2$ except for the case of $\ell\equiv-1\pmod{3}$, $m=1$, $\alpha=0$ and $\beta\ne 0$ with $j_{\beta}\in J\setminus J''$ where $\mathcal{C}_{\alpha,\beta}^{\perp}$ is of distance at least $3$. Moreover, if let $A_2^{\perp}$ be the number of codewords of $\mathcal{C}^{\perp}$ with weight $2$, then
\begin{enumerate}[(a)]
\item when $\ell\equiv 1\pmod{3}$, $A_2^{\perp}$ is equal to $\frac{(\ell^m-\ell+1)q}{3\ell^m}+\frac{(-2\ell-2\ell)\sqrt{q}}{3\ell^m}-\frac{(\ell^m-\ell+1)q}{\ell^m}$ if $\alpha=0$ and $\beta\ne 0$ with $j_{\beta}\in J'$, $\frac{(\ell^m-\ell+1)q}{3\ell^m}+\frac{(-2\ell^m+2\ell-2)\sqrt{q}}{3\ell^m}+\frac{-\ell^m+\ell-1}{\ell^m}$ if $\alpha=0$ and $\beta\ne 0$ with $j_{\beta}\in J\setminus J'$, $\frac{(\ell-1)q}{6\ell^m}+\frac{(2\ell^m-\ell+1)\sqrt{q}}{3\ell^m}+\frac{1-\ell}{2\ell^m}$ if $\alpha\ne 0$ and $\beta\ne 0$ with $(T_{\alpha},j_{\beta})$ belonging to one of the sets $\{1\}\times (J_1^{(3)}\cup J_3^{(3)})$ and $\{0\}\times (J_1^{(2)}\cup J_3^{(2)})$, and $\frac{(\ell-1)q}{6\ell^m}+\frac{(1-\ell)\sqrt{q}}{3\ell^m}+\frac{1-\ell}{2\ell^m}$ if $\alpha\ne 0$ and $\beta\ne 0$ with $(T_{\alpha},j_{\beta})$ belonging to one of the sets $\{0,1\}\times J'$, $\{0\}\times (J_1^{(3)}\cup J_3^{(3)})$ and $\{1\}\times (J_1^{(2)}\cup J_3^{(2)})$; and
\item  when $\ell\equiv -1\pmod{3}$, $A_2^{\perp}$ is equal to $\frac{(\ell^{m-1}-1)q}{3\ell^{m-1}}+\frac{2\sqrt{q}}{3\ell^{m-1}}+\frac{1-\ell^{m-1}}{\ell^{m-1}}$ if $\alpha=0$ and $\beta\ne 0$ with $j_{\beta}\in J''$, $\frac{(\ell^{m-1}-1)q}{3\ell^{m-1}}+\frac{(2-2\ell^{m-1})\sqrt{q}}{3\ell^{m-1}}+\frac{1-\ell^{m-1}}{\ell^{m-1}}$ if $\alpha=0$ and $\beta\ne 0$ with $j_{\beta}\in J\setminus J'$, $\frac{q}{6\ell^{m-1}}-\frac{\sqrt{q}}{3\ell^{m-1}}-\frac{1}{2\ell^{m-1}}$ if $\alpha\ne 0$ and $\beta\ne 0$ with $(m+T_{\alpha},j_{\beta})$ being in one of the sets $\mathbb{Z}\times J''$, $(2\mathbb{Z})\times (\{0\}\cup J_1^{(2)}\cup J_3^{(2)})$ and $(2\mathbb{Z}+1)\times (\{\ell^m\}\cup J_1^{(3)}\cup J_3^{(3)})$, and $\frac{q}{6\ell^{m-1}}+\frac{(2\ell^{m-1}-1)\sqrt{q}}{3\ell^{m-1}}-\frac{1}{2\ell^{m-1}}$ if $\alpha\ne 0$ and $\beta\ne 0$ with $(T_{\alpha},j_{\beta})$ belonging to one of the sets $(m+T_{\alpha},j_{\beta})$ is in $(2\mathbb{Z}+1)\times (\{0\}\cup J_1^{(2)}\cup J_3^{(2)})$ or in $(2\mathbb{Z})\times (\{\ell^m\}\cup J_1^{(3)}\cup J_3^{(3)})$. Here $2\mathbb{Z}$ represents the set of even integers and $2\mathbb{Z}+1$ the set of odd integers.
\end{enumerate}
\end{prop}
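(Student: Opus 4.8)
\medskip
\noindent\textbf{Proof plan.} The idea is to reinterpret the question about $\mathcal{C}_{\alpha,\beta}^{\perp}$ as a diagonal counting problem over $\fq$ and then evaluate that count with the exponential sums of Section~3. Write $D=\{d_1,\dots,d_n\}$. A word $c=(c_1,\dots,c_n)\in\fth^{n}$ lies in $\mathcal{C}_{\alpha,\beta}^{\perp}$ exactly when $\sum_{i=1}^{n}c_id_i=0$ in $\fq$. Since $0\notin D$, no single coordinate produces a dependence, so $\mathcal{C}_{\alpha,\beta}^{\perp}$ has minimum distance at least $2$, with equality iff two distinct elements of $D$ are $\fth$-proportional; as the characteristic is $3$ this means $D$ contains a pair $\{d,-d\}$. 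For such a pair $\{d_i,d_j\}$ with $d_i=-d_j$, the weight-$2$ words supported on $\{i,j\}$ are exactly the two vectors with $c_i=c_j\in\fth^{*}$, and since $d\mapsto-d$ is a fixed-point-free involution of $\{d\in D:-d\in D\}$, the number of such pairs is $\tfrac12\#\{d\in D:-d\in D\}$. Hence
\[
A_2^{\perp}=\#\{d\in D:\ -d\in D\},
\]
so the minimum distance is $2$ iff $A_2^{\perp}>0$ and at least $3$ otherwise.

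Next I would make $A_2^{\perp}$ explicit. Because $e=\phi(2\ell^m)$ is even, $q\equiv1\pmod4$, so $r:=\tfrac{q-1}{2\ell^m}$ is even and $(-x)^{r}=x^{r}$. Therefore $x,-x\in D$ simultaneously iff $\Tr(x^{r}+\beta x)=\alpha$ and $\Tr(x^{r}-\beta x)=\alpha$, which, after adding and subtracting and rescaling in $\fth$, is equivalent to $\Tr(x^{r})=\alpha$ together with $\Tr(\beta x)=0$. Orthogonality of additive characters then gives
\[
A_2^{\perp}=\frac1{9}\sum_{a=0}^{2}\sum_{b=0}^{2}\zeta^{-a\alpha}\sum_{x\in\fq^{*}}\chi\!\left(ax^{r}+b\beta x\right)=\frac1{9}\sum_{a,b=0}^{2}\zeta^{-a\alpha}S_{2\ell^m}(a,b\beta).
\]
I would then split this double sum into four parts: the term $a=b=0$ equals $q-1$; the two terms with $a=0$, $b\ne0$ each equal $-1$ (since $\beta\ne0$); the terms with $b=0$, $a\ne0$ add up to $\zeta^{-\alpha}S_{2\ell^m}(1,0)+\zeta^{\alpha}S_{2\ell^m}(2,0)=w(\alpha,0)$, using $\zeta^{-2\alpha}=\zeta^{\alpha}$ and \eqref{c3-12}; and the four terms with $a,b\ne0$ collapse, via $S_{2\ell^m}(a,\beta)=S_{2\ell^m}(a,2\beta)$ from Theorem~\ref{thm3.1.1}(a) and \eqref{c3-12}, to $2w(\alpha,\beta)$. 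This yields the compact identity
\[
A_2^{\perp}=\tfrac1{9}\bigl(q-3+w(\alpha,0)+2w(\alpha,\beta)\bigr).
\]

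It then remains to substitute the explicit values of $w(\alpha,0)$ and $w(\alpha,\beta)$ from Corollary~\ref{cor3.2} into this identity and simplify the resulting rational expression in $q,\sqrt q,\ell,\ell^{m-1}$ in each listed sub-case; this reproduces exactly the formulas of (a) and (b) (for example, when $\ell\equiv-1\pmod3$, $\alpha=0$ and $j_{\beta}\in J\setminus J''$ one obtains $A_2^{\perp}=\tfrac{(\ell^{m-1}-1)(\sqrt q-3)(\sqrt q+1)}{3\ell^{m-1}}$). One also notes that for $j_{\beta}\in\{0,\ell^m\}$ the value of $w(\alpha,\beta)$ agrees with the one recorded for $j_{\beta}\in J\setminus J'$, which is why the exceptional set in the distance statement is phrased through $J\setminus J''$. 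Finally, reading these closed forms together with $q=3^{\phi(2\ell^m)}\ge81$ (which holds since $\ell\ge5$), one checks that $A_2^{\perp}$ is a positive integer in every case except $\ell\equiv-1\pmod3$, $m=1$, $\alpha=0$, $j_{\beta}\in J\setminus J''$, where the factor $\ell^{m-1}-1$ vanishes, so $A_2^{\perp}=0$ and the distance is at least $3$.

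The conceptual core --- the identity $A_2^{\perp}=\#\{d\in D:-d\in D\}$ and its reduction to $\Tr(x^{r})=\alpha$, $\Tr(\beta x)=0$ --- is short; the real work, and the only genuine obstacle, is organizational: Corollary~\ref{cor3.2} already compresses the five-case analysis of Theorem~\ref{thm3.1.1} and the symmetry $S_{2\ell^m}(2,\cdot)=S_{2\ell^m}(1,\cdot;\zeta^{-1})$, and one must carefully match each of its sub-cases against $\tfrac19(q-3+w(\alpha,0)+2w(\alpha,\beta))$, simplify without sign errors, and verify positivity --- long but fully mechanical.
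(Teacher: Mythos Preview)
Your proposal is correct and follows essentially the same route as the paper: both reduce $A_2^{\perp}$ to the count $\#\{x\in\fq^{*}:\Tr(x^{r})=\alpha,\ \Tr(\beta x)=0\}$ via the observation that weight-$2$ dual codewords correspond to pairs $d,-d\in D$ (using that $r=(q-1)/(2\ell^m)$ is even), expand this by additive orthogonality, and arrive at $A_2^{\perp}=\tfrac19\bigl(q-3+w(\alpha,0)+2w(\alpha,\beta)\bigr)$ before invoking Corollary~\ref{cor3.2}. The paper writes the intermediate formula as $\tfrac19\bigl(q-3+w(\alpha,\beta)+w(\alpha,2\beta)+w(\alpha,0)\bigr)$ and only then uses $w(\alpha,\beta)=w(\alpha,2\beta)$, whereas you fold that symmetry in one step earlier; this is a cosmetic difference, and your added remarks on the $J'$ versus $J''$ bookkeeping and the positivity check are consistent with the paper's conclusions.
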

\begin{proof}
Let $n$ be the size of the defining set $D$, and write $D=\{d_1,\ldots,d_n\}$. The explicit value of $n$ will be derived in Theorem \ref{thm4.2}, from which one sees that $n>0$ when $\beta\ne 0$. It then follows that the minimum distance of $\mathcal{C}_{\alpha,\beta}^{\perp}$ is greater than one. Now let us evaluate $A_2^{\perp}$, the number of codewords in $\mathcal{C}_{\alpha,\beta}^{\perp}$ of weight two. For any codewords $c$ in $\mathcal{C}_{\alpha,\beta}^{\perp}$ of weight two, one lets $c=(c_1,\ldots,c_n)$ with $c_i, c_j\ne 0$ for some two distinct subscripts $i, j$, and $c_k=0$ for any $k\ne i, j$. It implies that $\langle c,c'\rangle=0$ for any $c'\in\mathcal{C}_{\alpha,\beta}$, that is,
$$c_i{\rm Tr}(d_ix)+c_j{\rm Tr}(d_jx)=0,\ \forall x\in\fq.$$
This means that $c_id_i+c_jd_j=0$. Note that $d_i\ne d_j$ and $c_i,c_j\in\fth^*$. It follows that $c_i=c_j$ and $d_j=-d_i$. So there is a one to one relation between codewords in $\mathcal{C}_{\alpha,\beta}^{\perp}$ of weight two and the pairs $(d_i,d_j)$ with $d_i,d_j\in D$ and $d_i+d_j=0$. This gives that 
$$A_2^{\perp}=\#\{(d,d')\in D^2:\ d+d'=0\}.$$
By definition of $D$ and noting that $\frac{q-1}{2\ell^m}$ is even and $\beta\ne 0$, we have that
$$A_2^{\perp}=\#\{x\in \fq^*:\ {\rm Tr}(x^{\frac{q-1}{2\ell^m}})=\alpha\ \text{and}\ {\rm Tr}(\beta x)=0\}.$$
It follows that
\begin{align*}
A_2^{\perp}&=1/9\sum_{x\in\fq^*}\sum_{y\in\fth}\zeta^{y({\rm Tr}(x^{\frac{q-1}{2\ell^m}})-\alpha)}\sum_{z\in\fth}\zeta^{z{\rm Tr}(\beta x)}\\
&=1/9\sum_{x\in\fq^*}\Big(\sum_{y\in\fth^*}\zeta^{y{\rm Tr}(x^{\frac{q-1}{2\ell^m}})-y\alpha}+1\Big)\Big(\sum_{z\in\fth^*}\zeta^{z{\rm Tr}(\beta x)}+1\Big)\\
&=1/9\Big(q-3+\sum_{z\in\fth^*}\sum_{x\in\fq}\chi(zbx)+
\sum_{y\in\fth^*}\zeta^{-y\alpha}\sum_{x\in\fq^*}\chi(yx^{\frac{q-1}{2\ell^m}})\\
&\quad\quad\quad +\sum_{y\in\fth^*}\zeta^{-y\alpha}\sum_{z\in\fth^*}\sum_{x\in\fq^*}\chi(yx^{\frac{q-1}{2\ell^m}}+z\beta x)\Big)\\
&=1/9\Big(q-3+\sum_{y\in\fth^*}\zeta^{-y\alpha}S_{2\ell^m}(y,0)+\sum_{y\in\fth^*}\zeta^{-y\alpha}\sum_{z\in\fth^*}S_{2\ell^m}(y,z\beta)\Big).
\end{align*}
It is to say
\begin{align}\label{c4-2}
A_2^{\perp}=1/9(q-3+w(\alpha,\beta)+w(\alpha,2\beta)+w(\alpha,0)),
\end{align}
where $w(\ ,\ )$ is defined as (\ref{c3-12}). Employing Corollary \ref{cor3.2} into (\ref{c4-2}), we derive the values of $A_2^{\perp}$. From the values of $A_2^{\perp}$, one sees that $A_2^{\perp}>0$ except for the case $\ell\equiv -1\pmod{3}$, $m=1$, $\alpha=0$ and $\beta\ne0$ with $j_{\beta}\in J\setminus J''$ where $A_2^{\perp}=0$. So, the proof of Proposition \ref{prop4.1} is done.
\end{proof}

Now we can report the main result of this section.
\begin{thm}\label{thm4.2}
The following statements regarding the Hamming weight distribution of $\mathcal{C}$ are true.
\begin{enumerate}[(a)]
\item Let $\beta=0$. Then $\mathcal{C}$ is a two-weight code, except for the case $\alpha=0$, $m=1$ and $\ell\equiv -1\pmod{3}$ where $\mathcal{C}$ is an empty set. More precisely, 
\begin{enumerate}[(i)]
\item $\mathcal{C}$ is a $[\frac{(\ell^m-\ell+1)(q-1)}{\ell^m},(\ell-1)\ell^{m-1}]$ code having two weights $w_1=\frac{2(\ell^m-\ell+1)q}{3\ell^m}-\frac{2(\ell-1)\sqrt{q}}{3\ell^m},\ w_2=\frac{2(\ell^m-\ell+1)q}{3\ell^m}+\frac{2(\ell^m-\ell+1)\sqrt{q}}{3\ell^m}$ with $A_{w_1}=\frac{(\ell^m-\ell+1)(q-1)}{\ell^m}, \ A_{w_2}=\frac{(\ell-1)(q-1)}{\ell^m}$ if $\alpha=0$ and $\ell\equiv 1\pmod{3}$;
\item $\mathcal{C}$ is a $[\frac{(\ell-1)(q-1)}{2\ell^m},(\ell-1)\ell^{m-1}]$ code having two weights $w_1=\frac{(\ell-1)q}{3\ell^m}+\frac{(\ell-1)\sqrt{q}}{3\ell^m},\ w_2=\frac{(\ell-1)q}{3\ell^m}-\frac{(2\ell^m-\ell+1)\sqrt{q}}{3\ell^m}$ with $A_{w_1}=\frac{(2\ell^m-\ell+1)(q-1)}{\ell^m}, \ A_{w_2}=\frac{(\ell-1)(q-1)}{\ell^m}$ if $\alpha\ne 0$ and $\ell\equiv 1\pmod{3}$;
\item $\mathcal{C}$ is a $[\frac{(\ell^{m-1}-1)(q-1)}{\ell^{m-1}},(\ell-1)\ell^{m-1}]$ code having two weights $w_1=\frac{2(\ell^{m-1}-1)q}{3\ell^{m-1}}+\frac{2(\ell^{m-1}-1)\sqrt{q}}{3\ell^{m-1}},\ w_2=\frac{2(\ell^{m-1}-1)q}{3\ell^{m-1}}-\frac{2\sqrt{q}}{3\ell^{m-1}}$ with $A_{w_1}=\frac{q-1}{\ell^{m-1}}, \ A_{w_2}=\frac{(\ell^{m-1}-1)(q-1)}{\ell^{m-1}}$ if $\alpha=0$, $\ell\equiv -1\pmod{3}$ and $m\ge 2$;
\item $\mathcal{C}$ is a $[\frac{q-1}{2\ell^{m-1}},(\ell-1)\ell^{m-1}]$ code having two weights $w_1=\frac{q}{3\ell^{m-1}}+\frac{\sqrt{q}}{3\ell^{m-1}},\ w_2=\frac{q}{3\ell^{m-1}}-\frac{(2\ell^{m-1}-1)\sqrt{q}}{3\ell^{m-1}}$ with $A_{w_1}=\frac{(2\ell^{m-1}-1)(q-1)}{2\ell^{m-1}}, \ A_{w_2}=\frac{q-1}{2\ell^{m-1}}$ if $\alpha\ne 0$ and $\ell\equiv -1\pmod{3}$;
\end{enumerate}
\item Let $\beta\ne 0$ and $\alpha=0$. Then $\mathcal{C}$ is a four-weight code, except for the case $m=1$ and $\ell\equiv -1\pmod{3}$ where $\mathcal{C}$ is a $[\frac{q}{3},(\ell-1)\ell^{m-1}]$ code having only two nonzero weights $w_1=\frac{q}{3}$ and $w_2=\frac{2q}{9}$ with enumerates $A_{w_1}=2$ and $A_{w_2}=q-3$. Furthermore, 
\begin{enumerate}[(i)]
\item $\mathcal{C}$ is a $[\frac{q}{3}+\frac{(\ell-1)\sqrt{q}}{\ell^m}+\frac{-\ell^m+p-1}{\ell^m},(\ell-1)\ell^{m-1}]$ code having four weights $w_1=\frac{(\ell-1)(q+\sqrt{q})}{3\ell^m}$, $w_2=\frac{2q}{9}$, $w_3=\frac{2q}{9}+\frac{\sqrt{q}}{3}$, $w_4=\frac{2q}{9}+\frac{2\sqrt{q}}{3}$ with $A_{w_1}=2$, $A_{w_2}=\frac{(\ell^m-\ell+1)^2q}{\ell^{2m}}-\frac{(\ell-1)(\ell^m-2\ell+2)\sqrt{q}}{\ell^{2m}}+\frac{(\ell-1)(\ell^m+\ell-1)}{\ell^{2m}}-3$, $A_{w_3}=\frac{2(\ell-1)(\ell^m-\ell+1)q}{\ell^{2m}}+\frac{2(\ell-1)(\ell^m-2\ell+2)\sqrt{q}}{\ell^{2m}}-\frac{2(\ell-1)^2}{\ell^{2m}}$ and $A_{w_4}=\frac{(\ell-1)^2q}{\ell^{2m}}+\frac{(\ell-1)(-\ell^m+2\ell-2)\sqrt{q}}{\ell^{2m}}-\frac{(\ell-1)(-p^m+p-1)}{\ell^{2m}}$ if $\ell\equiv 1\pmod{3}$, $\alpha=0$ and $\beta\ne 0$ with $j_{\beta}\in J'$;
\item $\mathcal{C}$ is a $[\frac{q}{3}+\frac{(-\ell^m+\ell-1)(\sqrt{q}+1)}{\ell^m},(\ell-1)\ell^{m-1}]$ code having four weights $w_1=\frac{(\ell-1)(q+\sqrt{q})}{3\ell^m}-\frac{\sqrt{q}}{3}$, $w_2=\frac{2q}{9}$, $w_3=\frac{2q}{9}-\frac{\sqrt{q}}{3}$, $w_4=\frac{2q}{9}-\frac{2\sqrt{q}}{3}$ with $A_{w_1}=2$, $A_{w_2}=\frac{(\ell-1)^2q}{\ell^{2m}}+\frac{(\ell-1)(-3\ell^m+2\ell-2)\sqrt{q}}{\ell^{2m}}+\sqrt{q}+\frac{(\ell-1)(-3\ell^m+\ell-1)}{\ell^{2m}}-1$, $A_{w_3}=\frac{2(\ell-1)(\ell^m-\ell+1)q}{\ell^{2m}}+\frac{2(\ell-1)(3\ell^m-2\ell+2)\sqrt{q}}{\ell^{2m}}-2\sqrt{q}+\frac{2(\ell-1)(2\ell^m-\ell+1)}{\ell^{2m}}-2$ and $A_{w_4}=\frac{(\ell-1)(-2\ell^m+\ell-1)q}{\ell^{2m}}+q+\frac{(\ell-1)(-3\ell^m+2\ell-2)\sqrt{q}}{\ell^{2m}}+\sqrt{q}+\frac{(\ell-1)(-p^m+p-1)}{\ell^{2m}}$ if $\ell\equiv 1\pmod{3}$, $\alpha=0$ and $\beta\ne 0$ with $j_{\beta}\in J\setminus J'$;
\item $\mathcal{C}$ is a $[\frac{q}{3}+\frac{\sqrt{q}}{\ell^{m-1}}+\frac{1-\ell^{m-1}}{p^{m-1}},(\ell-1)\ell^{m-1}]$ code having four weights $w_1=\frac{q+\sqrt{q}}{3\ell^{m-1}}$, $w_2=\frac{2q}{9}$, $w_3=\frac{2q}{9}+\frac{\sqrt{q}}{3}$, $w_4=\frac{2q}{9}+\frac{2\sqrt{q}}{3}$ with $A_{w_1}=2$, $A_{w_2}=\frac{(\ell^{m-1}-1)^2q}{\ell^{2m-2}}+\frac{(2-\ell^{m-1})\sqrt{q}}{\ell^{2m-2}}+\frac{\ell^{m-1}+1}{\ell^{2m-2}}$, $A_{w_3}=\frac{2(\ell^{m-1}-1)q}{\ell^{2m-2}}+\frac{2(\ell^{m-1}-1)\sqrt{q}}{\ell^{2m-2}}-\frac{2}{\ell^{2m-2}}$ and $A_{w_4}=\frac{q}{\ell^{2m-2}}+\frac{(2-\ell^{m-1})\sqrt{q}}{\ell^{2m-2}}+\frac{1-\ell^{m-1}}{\ell^{2m-2}}$ if $m\ge 2$, $\ell\equiv -1\pmod{3}$, $\alpha=0$ and $\beta\ne 0$ with $j_{\beta}\in J''$;
\item $\mathcal{C}$ is a $[\frac{q}{3}+\frac{(1-\ell^{m-1})(\sqrt{q}+1)}{\ell^{m-1}},(\ell-1)\ell^{m-1}]$ code having four weights $w_1=\frac{q}{3\ell^{m-1}}+\frac{(1-\ell^{m-1})\sqrt{q}}{3\ell^{m-1}}$, $w_2=\frac{2q}{9}$, $w_3=\frac{2q}{9}-\frac{\sqrt{q}}{3}$, $w_4=\frac{2q}{9}-\frac{2\sqrt{q}}{3}$ with $A_{w_1}=2$, $A_{w_2}=\frac{q}{\ell^{2m-2}}+\frac{(2-3\ell^{m-1})\sqrt{q}}{\ell^{2m-2}}+\sqrt{q}+\frac{1-3\ell^{m-1}}{\ell^{2m-2}}-1$, $A_{w_3}=\frac{2(\ell^{m-1}-1)q}{\ell^{2m-2}}+\frac{2(3\ell^{m-1}-2)\sqrt{q}}{\ell^{2m-2}}-2\sqrt{q}+\frac{2(2\ell^{m-1}-1)}{\ell^{2m-2}}-2$ and $A_{w_4}=\frac{(1-2\ell^{m-1})q}{\ell^{2m-2}}+q-\frac{2(\ell^{m-1}-1)\sqrt{q}}{\ell^{2m-2}}+\sqrt{q}+\frac{1-\ell^{m-1}}{\ell^{2m-2}}$ if $m\ge 2$, $\ell\equiv -1\pmod{3}$, $\alpha=0$ and $\beta\ne 0$ with $j_{\beta}\in J\setminus J''$;
\end{enumerate}
\item Let $\beta\ne 0$ and $\alpha\ne 0$. Then $\mathcal{C}$ is a four-weight code, except for the two cases $m=1$, $\ell=5$, $T_{\alpha}=0$ and $j_{\beta}\in\{0\}\cup J_1^{(2)}\cup J_3^{(2)}$, or $m=1$, $\ell=5$, $T_{\alpha}=1$ and $j_{\beta}\in\{5\}\cup J_1^{(3)}\cup J_3^{(3)}$ where $\mathcal{C}$ is a $[22,4]$ code having only three nonzero weights $w_1=12$, $w_2=18$ and $w_3=15$ with enumerates $A_{w_1}=22$, $A_{w_2}=18$ and $A_{w_3}=40$. Precisely, we have that 
\begin{enumerate}[(i)]
\item $\mathcal{C}$ is a $[\frac{q}{3}+\frac{(1-\ell)(\sqrt{q}+1)}{2\ell^m},(\ell-1)\ell^{m-1}]$ code having four weights $w_1=\frac{(2\ell^m-\ell+1)q}{6\ell^m}+\frac{(1-\ell)\sqrt{q}}{6\ell^m}$, $w_2=\frac{2q}{9}$, $w_3=\frac{2q}{9}-\frac{\sqrt{q}}{3}$, $w_4=\frac{2q}{9}-\frac{2\sqrt{q}}{3}$ with $A_{w_1}=2$, $A_{w_2}=\frac{(\ell-1)(-4\ell^m+\ell-1)q}{4\ell^{2m}}+q+\frac{(\ell-1)(-\ell^m+\ell-1)\sqrt{q}}{2\ell^{2m}}+\frac{(\ell-1)(2\ell^m+\ell-1)}{4\ell^{2m}}-3$, $A_{w_3}=\frac{(\ell-1)(2\ell^m-\ell+1)q}{2\ell^{2m}}+\frac{(\ell-1)(\ell^m-\ell+1)\sqrt{q}}{\ell^{2m}}-\frac{(\ell-1)^2}{2\ell^{2m}}$ and $A_{w_4}=\frac{(\ell-1)^2q}{4\ell^{2m}}+\frac{(\ell-1)(-\ell^m+\ell-1)\sqrt{q}}{2\ell^{2m}}+\frac{(\ell-1)(-2p^m+p-1)}{4\ell^{2m}}$ if $\ell\equiv 1\pmod{3}$, $\alpha\ne 0$ and $\beta\ne 0$ with $(T_{\alpha},j_{\beta})$ belonging to one of the sets $\{0,1\}\times J'$, $\{0\}\times (J_1^{(3)}\cup J_3^{(3)})$ and $\{1\}\times (J_1^{(2)}\cup J_3^{(2)})$;
\item $\mathcal{C}$ is a $[\frac{q+3\sqrt{q}}{3}+\frac{(1-\ell)(\sqrt{q}+1)}{2\ell^m},(\ell-1)\ell^{m-1}]$ code having four weights $w_1=\frac{(2\ell^m-\ell+1)(q+\sqrt{q})}{6\ell^m}$, $w_2=\frac{2q}{9}$, $w_3=\frac{2q}{9}+\frac{\sqrt{q}}{3}$, $w_4=\frac{2q}{9}+\frac{2\sqrt{q}}{3}$ with $A_{w_1}=2$, $A_{w_2}=\frac{(\ell-1)^2q}{4\ell^{2m}}+\frac{(\ell-1)(-3\ell^m+\ell-1)\sqrt{q}}{2\ell^{2m}}+\sqrt{q}+\frac{(\ell-1)(-6\ell^m+\ell-1)}{4\ell^{2m}}$, $A_{w_3}=\frac{(\ell-1)(2\ell^m-\ell+1)q}{2\ell^{2m}}+\frac{(\ell-1)(3\ell^m-\ell+1)\sqrt{q}}{\ell^{2m}}-2\sqrt{q}+\frac{(\ell-1)(4\ell^m-p+1)}{2\ell^{2m}}-2$ and $A_{w_4}=\frac{(\ell-1)(p-1-4\ell^m)q}{4\ell^{2m}}+q+\frac{(\ell-1)(-3\ell^m+\ell-1)\sqrt{q}}{2\ell^{2m}}+\sqrt{q}+\frac{(\ell-1)(p-1-2p^m)}{4\ell^{2m}}$ if $\ell\equiv 1\pmod{3}$$\alpha\ne 0$ and $\beta\ne 0$ with $(T_{\alpha},j_{\beta})$ belonging to one of the sets $\{1\}\times (J_1^{(3)}\cup J_3^{(3)})$ and $\{0\}\times (J_1^{(2)}\cup J_3^{(2)})$;
\item $\mathcal{C}$ is a $[\frac{q}{3}-\frac{(\sqrt{q}+1)}{2\ell^{m-1}},(\ell-1)\ell^{m-1}]$ code having four weights $w_1=\frac{(2\ell^{m-1}-1)q-\sqrt{q}}{6\ell^{m-1}}$, $w_2=\frac{2q}{9}$, $w_3=\frac{2q}{9}-\frac{\sqrt{q}}{3}$, $w_4=\frac{2q}{9}-\frac{2\sqrt{q}}{3}$ with $A_{w_1}=2$, $A_{w_2}=\frac{(-4\ell^{m-1}+1)q}{4\ell^{2m-2}}+q+\frac{(\ell^{m-1}-1)\sqrt{q}}{2\ell^{2m-2}}+\frac{2\ell^{m-1}+1}{4\ell^{2m-2}}-3$, $A_{w_3}=\frac{(2\ell^{m-1}-1)q}{2\ell^{2m-2}}+\frac{(\ell^{m-1}-1)\sqrt{q}}{\ell^{2m-2}}-\frac{1}{2\ell^{2m-2}}$ and $A_{w_4}=\frac{q}{4\ell^{2m-2}}+\frac{(1-\ell^{m-1})\sqrt{q}}{2\ell^{2m-2}}+\frac{1-2\ell^{m-1}}{4\ell^{2m-2}}$ if $\ell\equiv -1\pmod{3}$, $\ell^m\ne 5$, $\alpha\ne 0$ and $\beta\ne 0$ with $(m+T_{\alpha},j_{\beta})$ being in one of the sets $\mathbb{Z}\times J''$, $(2\mathbb{Z}+1)\times (\{\ell^m\}\cup J_1^{(3)}\cup J_3^{(3)})$ and $(2\mathbb{Z})\times (\{0\}\cup J_1^{(2)}\cup J_3^{(2)})$;
\item $\mathcal{C}$ is a $[\frac{q+3\sqrt{q}}{3}-\frac{(\sqrt{q}+1)}{2\ell^{m-1}},(\ell-1)\ell^{m-1}]$ code having four weights $w_1=\frac{(2\ell^{m-1}-1)(q+\sqrt{q})}{6\ell^{m-1}}$, $w_2=\frac{2q}{9}$, $w_3=\frac{2q}{9}+\frac{\sqrt{q}}{3}$, $w_4=\frac{2q}{9}+\frac{2\sqrt{q}}{3}$ with $A_{w_1}=2$, $A_{w_2}=\frac{q}{4\ell^{2m-2}}+\frac{(1-3\ell^{m-1})\sqrt{q}}{2\ell^{2m-2}}+\sqrt{q}+\frac{1-6\ell^{m-1}}{4\ell^{2m-2}}-1$, $A_{w_3}=\frac{2(\ell^{m-1}-1)q}{2\ell^{2m-2}}+\frac{(3\ell^{m-1}-1)\sqrt{q}}{\ell^{2m-2}}-2\sqrt{q}+\frac{4\ell^{m-1}-1}{\ell^{2m-2}}-2$ and $A_{w_4}=\frac{(1-4\ell^{m-1})q}{4\ell^{2m-2}}+q+\frac{(1-3\ell^{m-1})\sqrt{q}}{2\ell^{2m-2}}+\sqrt{q}+\frac{1-2\ell^{m-1}}{4\ell^{2m-2}}$ if $\ell\equiv -1\pmod{3}$, $\alpha\ne 0$ and $\beta\ne 0$ with $(m+T_{\alpha},j_{\beta})$ being in $(2\mathbb{Z}+1)\times (\{0\}\cup J_1^{(2)}\cup J_3^{(2)})$ or in $(2\mathbb{Z})\times (\{\ell^m\}\cup J_1^{(3)}\cup J_3^{(3)})$.
\end{enumerate}
\end{enumerate}
\end{thm}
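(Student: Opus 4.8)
The plan is the standard character-sum analysis of a defining-set code, with the explicit evaluations of $S_{2\ell^m}(a,b)$ from Theorem~\ref{thm3.1.1} and Corollary~\ref{cor3.2} as input. Expanding the indicator of $D$ and the condition $\mathrm{Tr}(d_ix)=0$ through the canonical additive character $\chi$, I would first obtain the length
$$n=\#D=\frac{1}{3}\bigl((q-1)+w(\alpha,\beta)\bigr),$$
so that Corollary~\ref{cor3.2} gives $n$ in every case; and then, after using $S_{2\ell^m}(a,b)=S_{2\ell^m}(a,2b)$ from Theorem~\ref{thm3.1.1}(a) together with the identities $2\beta\pm x=-(\beta\mp x)$ valid in $\mathbb{F}_3$, the uniform formula
$$\wt(c_x)=\frac{1}{9}\bigl(2q+2w(\alpha,\beta)-w(\alpha,\beta+x)-w(\alpha,\beta-x)\bigr)\qquad(x\in\mathbb{F}_q^*).$$
Corollary~\ref{cor3.2} shows that $w(\alpha,\cdot)$ takes at most two values on $\mathbb{F}_q^*$ (besides the separate value $w(\alpha,0)$), so this forces at most four weights when $\beta\ne0$, and at most two when $\beta=0$ (since then $\beta-x$ and $\beta+x$ differ only by sign and $w(\alpha,-x)=w(\alpha,x)$). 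Substituting the explicit $w$-values from Corollary~\ref{cor3.2}, and for $\beta=0$ returning to Corollary~\ref{cor2.3} and Lemma~\ref{lem2.4}, produces the lists $w_1,\ldots,w_4$ (resp.\ $w_1,w_2$) claimed in parts (c), (b) (resp.\ (a)).

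For the frequencies, I would first isolate the two codewords $x=\pm\beta$: for these one of $\beta\pm x$ vanishes, they realize a common weight $w_1$, and --- after checking that $w_1\ne w_2,w_3,w_4$ in the generic regime --- this gives $A_{w_1}=2$. For the remaining $q-3$ nonzero $x$ I would use the first two power moments. Since every coordinate of $\mathcal{C}$ is $\mathbb{F}_3$-balanced, $\sum_{x\ne0}\wt(c_x)=\tfrac{2nq}{3}$; and $\sum_{x\ne0}\wt(c_x)^2=\sum_{i,j}\#\{x\in\mathbb{F}_q:\mathrm{Tr}(d_ix)\ne0,\,\mathrm{Tr}(d_jx)\ne0\}$ equals $\tfrac{4q}{9}$ for every ordered pair with $d_j\ne\pm d_i$ and $\tfrac{2q}{3}$ otherwise, hence is a closed expression in $n$ and in $A_2^\perp$, the latter already evaluated in Proposition~\ref{prop4.1}. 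Together with $A_{w_1}+A_{w_2}+A_{w_3}+A_{w_4}=q-1$ (which uses $\dim\mathcal{C}=(\ell-1)\ell^{m-1}$, i.e.\ that $D$ spans $\mathbb{F}_q$ over $\mathbb{F}_3$ --- a point I would verify separately, and which is exactly what fails when $D=\emptyset$) this is a nonsingular $3\times3$ linear system whose solution I expect to be precisely the stated $A_{w_i}$; when $\beta=0$ a single moment suffices, and there the $A_{w_i}$ can alternatively be read off directly as $\tfrac{q-1}{2\ell^m}$ times the cardinality of the relevant union of the subsets $J',J'',J_1^{(1)},J_1^{(2)},J_1^{(3)},J_3^{(1)},\ldots$ of Section~3.

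It then remains to confirm that each listed weight genuinely occurs, i.e.\ that every admissible type of the pair $(\beta+x,\beta-x)$ is hit by some $x$. This is precisely Lemma~\ref{lem2.5} with $d=2\ell^m$: as soon as $2\ell^m\le d_0=(2-\tfrac{4}{\sqrt{5}})^{1/2}q^{1/4}$, each pair of cyclotomic classes of order $2\ell^m$ is realized, so all four (resp.\ two) weights appear and all the $A_{w_i}$ are positive. The finitely many pairs $(\ell,m)$ violating this inequality --- all with $m=1$ --- are treated by direct computation, which is exactly what yields the listed exceptional codes: the empty code when $\ell\equiv-1\pmod3$, $m=1$, $\alpha=\beta=0$; the $[q/3,(\ell-1)\ell^{m-1}]$ two-weight code when $\ell\equiv-1\pmod3$, $m=1$, $\alpha=0$, $\beta\ne0$ (here $w(\alpha,\cdot)$ degenerates to a constant on $\mathbb{F}_q^*$ because $J''=\emptyset$); and the $[22,4]$ three-weight code when $\ell=5$, $m=1$ under the two sign conditions on $(T_\alpha,j_\beta)$, where $w_1$ happens to coincide with $w_4$.

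The work is dominated by bookkeeping rather than by any single hard step: everything above must be pushed through each branch of Corollary~\ref{cor3.2} --- the top-level split $\beta=0$ versus $\beta\ne0$ and $\alpha=0$ versus $\alpha\ne0$, then the split on $\ell\bmod3$, then on which subset of $J$ contains $j_\beta$ (or $(T_\alpha,j_\beta)$) --- while carrying the $\sqrt{q}$-terms correctly through the $3\times3$ solve and matching the threshold $2\ell^m\le d_0$ to exactly the exceptional small codes. Pinning down the many $\sqrt{q}$-coefficients in the $A_{w_i}$, and verifying in each generic case that $w_1,\ldots,w_4$ are pairwise distinct so that $\mathcal{C}$ really is a four-weight code, is where the computation is most delicate.
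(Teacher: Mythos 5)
Your proposal follows essentially the same route as the paper: the length via $n=\tfrac{1}{3}(q-1+w(\alpha,\beta))$, the weight formula $\mathrm{wt}(c_x)=\tfrac{1}{9}\bigl(2q+2w(\alpha,\beta)-w(\alpha,\beta+x)-w(\alpha,\beta-x)\bigr)$, Lemma \ref{lem2.5} to guarantee that all combinations of the two values of $w(\alpha,\cdot)$ occur, and the Pless power moments combined with $A_2^{\perp}$ from Proposition \ref{prop4.1} to pin down the frequencies, with the same exceptional small cases. If anything, you are slightly more careful than the paper in insisting that the hypothesis $2\ell^m\le d_0$ of Lemma \ref{lem2.5} be verified and the finitely many violating pairs $(\ell,m)$ be settled by direct computation, a point the paper glosses over.
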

\begin{proof}
First, we compute the length of the code $\mathcal{C}_{\alpha,\beta}$ denoted by $n$, i.e., 
$$n=\#\{x\in\fq^*: \Tr(x^{\frac{q-1}{2\ell^m}}+\beta x)=\alpha\}.$$
For this, let us compute
$n_0:=\#\{x\in\fq: \Tr(x^{\frac{q-1}{2\ell^m}}+\beta x)=\alpha\}$ since $n=n_0- \mathbbm{1}(\alpha=0)$, where $\mathbbm{1}(\alpha=0)$ is $1$ if $\alpha=0$, $0$ otherwise. Note that
$$n_0=\frac{1}{3}\sum_{x\in\fq}\sum_{y\in\fth}\zeta^{y({\rm Tr}(x^{\frac{q-1}{2\ell^m}}+\beta x)-\alpha)}.$$
This gives that
\begin{align}\label{c4-3}
n_0&=\frac{q}{3}+\frac{1}{3}\sum_{y\in\fth^*}\zeta^{-y\alpha}
\Big(\sum_{x\in\fq^*}\chi(yx^{\frac{q-1}{2\ell^m}}+y\beta x)+1\Big)\nonumber\\
&=\frac{q}{3}+\frac{1}{3}\sum_{y\in\fth^*}\zeta^{-y\alpha}
\Big(S_{2\ell^m}(y,y\beta)+1\Big)\nonumber\\
&=\frac{q}{3}+\frac{1}{3}
\left(w(\alpha,\beta)+\zeta^{-\alpha}+\zeta^{\alpha}\right).
\end{align}
Note that $n=n_0- \mathbbm{1}(\alpha=0)$ and $\zeta^{-\alpha}+\zeta^{\alpha}=2$ if $\alpha=0$, $\zeta^{-\alpha}+\zeta^{\alpha}=-1$ otherwise. It follows that
\begin{align}\label{c4-4}
n=\frac{q+w(\alpha,\beta)-1}{3}.
\end{align}
Therefore, by Corollary \ref{cor3.2} we obtain the  length of $\mathcal{C}_{\alpha,\beta}$ as given in this theorem. From this result, one observes that $n>0$ with a exceptional case $\alpha=\beta=0$, $\ell\equiv -1\pmod{3}$ and $m=1$ where it is nonsense. So, in the following we assume that the exceptional case cannot happen.

Next, we are going to determine all nonzero weights of codewords of $\mathcal{C}_{\alpha,\beta}$. Let $wt(\cdot)$ be the Hamming weight of a codeword. We shall figure out $wt(c_{\gamma})$ for any $\gamma\in\fq^*$, where $c_{\gamma}:=(\Tr(\gamma d_1),\ldots,\Tr(\gamma d_n))\in\mathcal{C}_{\alpha,\beta}$. For $\gamma\in\fq^*$ let $N_{\gamma}$ be defined by
$$N_{\gamma}:=\#\{x\in\fq:\ \Tr(x^{\frac{q-1}{2\ell^m}}+\beta x)=\alpha\ \text{and}\ \Tr(\gamma x)=0\}.$$
Then one easily finds that $wt(c_{\gamma})=n_0-N_{\gamma}$.
So, in the following we just need to compute $N_{\gamma}$. By definition, we have that
\begin{align*}
N_{\gamma}&=1/9\sum_{x\in\fq}\sum_{y\in\fth}\zeta^{y\big({\rm Tr}(x^{\frac{q-1}{2\ell^m}}+\beta x)-\alpha\big)}\sum_{z\in\fth}\zeta^{z{\rm Tr}(\gamma x)}\nonumber\\
&=1/9\sum_{x\in\fq}\Big(\sum_{y\in\fth^*}\zeta^{y\big({\rm Tr}(x^{\frac{q-1}{2\ell^m}}+\beta x)-\alpha\big)}+1\Big)\big(\sum_{z\in\fth^*}\zeta^{z{\rm Tr}(\gamma x)}+1\Big)\nonumber\\
&=1/9\Big(q+\sum_{z\in\fth^*}\sum_{x\in\fq}\chi(z\gamma x)+
\sum_{y\in\fth^*}\zeta^{-y\alpha}\sum_{x\in\fq}\chi(yx^{\frac{q-1}{2\ell^m}}+y\beta x)\\
&\quad\quad+\sum_{y\in\fth^*}\zeta^{-y\alpha}\sum_{z\in\fth^*}\sum_{x\in\fq}
\chi(yx^{\frac{q-1}{2\ell^m}}+(y\beta+z\gamma)x)\Big).
\end{align*}
It is reduced to that 
\begin{align*}
N_{\gamma}=1/9\big(q+3\zeta^{-\alpha}+3\zeta^{\alpha}+w(\alpha,\beta)+w(\alpha,\beta+\gamma)+w(\alpha,\beta+2\gamma)\big).
\end{align*}
since $\sum_{x\in\fq}\chi(z\gamma x)=0$ for any $z,\gamma\in\fth^*$. Note that $wt(c_{\gamma})=n_0-N_{\gamma}$. It then follows from (\ref{c4-3}) that
\begin{align}\label{c4-5}
wt(c_{\gamma})=\frac{1}{9}\big(2q+2w(\alpha,\beta)-w(\alpha,\beta+\gamma)-w(\alpha,\beta-\gamma)\big).
\end{align}
From (\ref{c4-5}), the computation of $wt(c_{\gamma})$ can be divided into the following two cases.

$\bullet$ $\beta=0$. In this case,  $wt(c_{\gamma})=\frac{2q}{9}+\frac{2w(\alpha,0)}{9}-\frac{w(\alpha,\gamma)}{9}-\frac{w(\alpha,2\gamma)}{9}=\frac{2q}{9}+\frac{2(w(\alpha,0)-w(\alpha,\gamma))}{9}$. By Corollary \ref{cor3.2} we conclude that $wt(c_{\gamma})$ takes on exactly two values as presented in this theorem when $\gamma$ runs through $\fq^*$.

$\bullet$ $\beta\ne 0$. From (\ref{c4-5}), we need to figure out all possible values $w(\alpha,\beta+\gamma)+w(\alpha,\beta-\gamma)$ takes on once $\alpha$, $\beta$ and $\ell^m$ are given when $\gamma$ runs through $\fq^*$.  If $\gamma=\pm \beta$, then $wt(c_{\gamma})$ equals $\frac{2q}{9}+\frac{w(\alpha,\beta)}{9}-\frac{w(\alpha,0)}{9}$, denoted by $w_1$, which can be derived precisely from Corollary \ref{cor3.2}. Now let $\gamma$ run over $\fq^*\setminus\{\pm\beta\}$. From Corollary \ref{cor3.2}, we know that $w(\alpha,b)$ has two values, like $A$ and $B$, when $b$ runs through $\fq^*$. Without loss of generality, we assume $w(\alpha,\beta)=A$. By Lemma \ref{lem2.5}, we deduce that $w(\alpha,\beta+\gamma)+w(\alpha,\beta-\gamma)$ can take on the three different values $2A$, $2B$ and $A+B$ when $\gamma$ runs over $\fq^*$. It means that all the possible value of $wt(c_{\gamma})$ are $\frac{2q}{9}$, $\frac{2q}{9}+\frac{A-B}{9}$ and $\frac{2q}{9}+\frac{2(A-B)}{9}$, denoted by $w_2$, $w_3$ and $w_4$, respectively. So $\mathcal{C}_{\alpha}$ is a four-weight code with weights $w_1$, $w_2$, $w_3$ and $w_4$, and the number of codewords having weight $w_1$ is exactly $2$, except for three cases $\alpha=0$, $\beta\ne0$, $\ell\equiv -1\pmod{3}$ and $m=1$, or $m=1$, $\ell=5$, $T_{\alpha}=0$ and $j_{\beta}\in\{0\}\cup J_1^{(2)}\cup J_3^{(2)}$, or $m=1$, $\ell=5$, $T_{\alpha}=1$ and $j_{\beta}\in\{5\}\cup J_1^{(3)}\cup J_3^{(3)}$. In the first exceptional case, one observes that $A=B$ so that $\mathcal{C}_{\alpha,\beta}$ is a two-weight code with weights $w_1$ and $w_2$. In the latter two cases, one checks that $w_1=w_4$, and then $\mathcal{C}_{\alpha,\beta}$ is a three-weight code with weights $w_1$, $w_2$ and $w_3$.

Hence, we obtain all nonzero weights of $\mathcal{C}_{\alpha,\beta}$ as given in this theorem. In the sequel, let us work out the weight enumerates as follows.

If $\mathcal{C}_{\alpha,\beta}$ is a two-weight code having weights $w_1$ and $w_2$, and let $A_{w_1}$ and $A_{w_2}$ be their weight enumerates. Then by 
the first two Pless Power Moments \cite{[HCP]}, we have
$$\begin{cases}
A_{w_1}+A_{w_2}=q-1,\\
A_{w_1}w_1+A_{w_2}w_2=\frac{2nq}{3}.
\end{cases}$$
From this, one derive the weight enumerates $A_{w_1}$ and $A_{w_2}$.

If $\mathcal{C}_{\alpha,\beta}$ is a three-weight code having weights $w_1$, $w_2$ and $w_3$, and let $A_{w_1}$, $A_{w_2}$ and $A_{w_3}$ be their weight enumerates. Then by the first three Pless Power Moments \cite{[HCP]}, we have
$$\begin{cases}
A_{w_1}+A_{w_2}+A_{w_2}=q-1,\\
A_{w_1}w_1+A_{w_2}w_2+A_{w_3}w_3=\frac{2nq}{3},\\
A_{w_1}w_1^2+A_{w_2}w_2^2+A_{w_3}w_3^2=\frac{1}{9}q(2n(2n+1) + 2A_2^{\perp}),
\end{cases}$$
where $A_2^{\perp}$ denotes the number of codeword in the dual code with weight two. From this together with the result of $A_2^{\perp}$ in Proposition \ref{prop4.1}, the weight enumerates $A_{w_1}$, $A_{w_2}$ and $A_{w_3}$ can be obtained.

If $\mathcal{C}_{\alpha,\beta}$ is a four-weight code having weights $w_1$, $w_2$, $w_3$ and $w_4$, and let $A_{w_1}$, $A_{w_2}$, $A_{w_3}$ and $A_{w_4}$ be their weight enumerates. Also from the first three Pless Power Moments, we derive that
$$\begin{cases}
A_{w_1}+A_{w_2}+A_{w_2}+A_{w_4}=q-1,\\
A_{w_1}w_1+A_{w_2}w_2+A_{w_3}w_3+A_{w_4}w_4=\frac{2nq}{3},\\
A_{w_1}w_1^2+A_{w_2}w_2^2+A_{w_3}w_3^2++A_{w_4}w_4^2=\frac{1}{9}q(2n(2n+1) + 2A_2^{\perp}).
\end{cases}$$
Note that $A_{w_1}=2$. By solving this linear equations with three unknowns, the weight enumerates $A_{w_1}$, $A_{w_2}$,$A_{w_3}$ and $A_{w_4}$ are settled.

Finally, let us determine the dimension of the code $\mathcal{C}_{\alpha,\beta}$. From the computation above we know that $wt(c)>0$ for any nonzero codeword $c\in\mathcal{C}_{\alpha,\beta}$. It follows that the map 
$$x\mapsto (\Tr(xd_1),\Tr(xd_2),\ldots,\Tr(xd_n))$$
from $\fq$ to $\mathcal{C}_{\alpha,\beta}$ is a bijection. It implies that
$\#\mathcal{C}_{\alpha,\beta}=q$, that is, the dimension of $\mathcal{C}_{\alpha,\beta}$ is $(\ell-1)\ell^{m-1}$ as desired. 

This completes the proof of Theorem \ref{thm4.2}.
\end{proof}
The striking similarity between the Hamming weight distribution of the code $C_{0,0}$ we constructed and that in reference  \cite{[CG]} has piqued our interest on the trace values of $x^{\frac{q-1}{\ell^m}}$ and $x^{\frac{q-1}{2\ell^m}}$ for any $x\in\fq^*$. Define a partition $(F_1,F_2,F_3,F_4,F_4,F_6)$ of $\fq^*$ by 
\begin{align*}
F_1&=\{x\in\fq^*:\ x^{\frac{q-1}{2\ell^m}}=1\},\\
F_2&=\{x\in\fq^*:\ x^{\frac{q-1}{2\ell^m}}\ne 1, x^{\frac{q-1}{\ell^{m}}}=1\ \text{and}\ x^{\frac{q-1}{2\ell^{m-1}}}=1\},\\
F_3&=\{x\in\fq^*:\ x^{\frac{q-1}{2\ell^m}}\ne 1, x^{\frac{q-1}{\ell^{m}}}=1\ \text{and}\ x^{\frac{q-1}{2\ell^{m-1}}}\ne 1\},\\
F_4&=\{x\in\fq^*:\ x^{\frac{q-1}{2\ell^m}}\ne 1, x^{\frac{q-1}{\ell^{m}}}\ne 1,\ x^{\frac{q-1}{2\ell^{m-1}}}=1\  \text{and}\ x^{\frac{q-1}{\ell^{m-1}}}=1\},\\
F_5&=\{x\in\fq^*:\ x^{\frac{q-1}{2\ell^m}}\ne 1, x^{\frac{q-1}{\ell^{m}}}\ne 1,\ x^{\frac{q-1}{2\ell^{m-1}}}\ne1\  \text{and}\ x^{\frac{q-1}{\ell^{m-1}}}=1\},\\
F_6&=\{x\in\fq^*:\ x^{\frac{q-1}{2\ell^m}}\ne 1, x^{\frac{q-1}{\ell^{m}}}\ne 1,\ x^{\frac{q-1}{2\ell^{m-1}}}\ne1\  \text{and}\ x^{\frac{q-1}{\ell^{m-1}}}\ne1\}.
\end{align*}
\begin{prop}\label{prop4.3}
For $x\in\fq^*$ let $\Tr_1$ and $\Tr_2$ be the traces of $x^{\frac{q-1}{\ell^m}}$ and $x^{\frac{q-1}{2\ell^m}}$, respectively. Then we have that
$$\Tr_1=\begin{cases}
0,&\text{if}\ x\in F_1\cup F_2\cup F_3\cup F_6,\\
-1,&\text{if}\ x\in F_4\cup F_5,
\end{cases},\ \Tr_2=\begin{cases}
0,&\text{if}\ x\in F_1\cup F_3\cup F_6,\\
1,&\text{if}\ x\in F_2\cup F_5,\\
-1,&\text{if}\ x\in F_4
\end{cases}$$
if $\ell\equiv 1\pmod{3}$; and
$$\Tr_1=\begin{cases}
(-1)^{m-1},&\text{if}\ x\in F_1\cup F_2\cup F_3,\\
(-1)^m,&\text{if}\ x\in F_4\cup F_5,\\
0,&\text{if}\ x\in F_6,
\end{cases},\ \Tr_2=\begin{cases}
(-1)^{m-1},&\text{if}\ x\in F_1\cup F_5,\\
(-1)^m,&\text{if}\ x\in F_3\cup F_4,\\
0,&\text{if}\ x\in F_2\cup F_6
\end{cases}$$
if $\ell\equiv -1\pmod{3}$.
\end{prop}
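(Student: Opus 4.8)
The plan is to parametrise $\fq^*$ multiplicatively and reduce every condition to a divisibility statement on the exponent. Since $3$ is a primitive root modulo $2\ell^m$, we have $2\ell^m\mid q-1$, so $\xi=g^{(q-1)/(2\ell^m)}$ is a primitive $2\ell^m$-th root of unity. Writing $x=g^s$ for $x\in\fq^*$ gives $x^{(q-1)/(2\ell^m)}=\xi^s$, $x^{(q-1)/\ell^m}=\xi^{2s}$, $x^{(q-1)/(2\ell^{m-1})}=\xi^{\ell s}$ and $x^{(q-1)/\ell^{m-1}}=\xi^{2\ell s}$, so the four equations defining the $F_i$ become $2\ell^m\mid s$, $\ell^m\mid s$, $2\ell^{m-1}\mid s$ and $\ell^{m-1}\mid s$, respectively. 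First I would rewrite each class in canonical form: $F_1\leftrightarrow 2\ell^m\mid s$; $F_2=\emptyset$, since $\lcm(\ell^m,2\ell^{m-1})=2\ell^m$ forces $\ell^m\mid s$ and $2\ell^{m-1}\mid s$ to imply $2\ell^m\mid s$; $F_3\leftrightarrow s=\ell^m t$ with $t$ odd; $F_4\leftrightarrow s=2\ell^{m-1}u$ with $\ell\nmid u$; $F_5\leftrightarrow s=\ell^{m-1}v$ with $v$ odd and $\ell\nmid v$; and $F_6\leftrightarrow \ell^{m-1}\nmid s$. In doing so one uses that $2\ell^{m-1}\mid s$ already implies $\ell^{m-1}\mid s$ and that $\ell^m\nmid s$ implies $2\ell^m\nmid s$, which is also what makes $(F_1,\dots,F_6)$ a genuine partition.

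Next, since $\Tr_2=\Tr(\xi^s)$ and $\Tr_1=\Tr(\xi^{2s})$, both quantities are delivered by Lemma \ref{lem2.4} as soon as one decides, for $j=s$ and for $j=2s$, which of its five cases applies. For $j=2s$ the relevant observation is that $2\mid 2s$ always, while $\ell^{m-1}\mid 2s\Leftrightarrow\ell^{m-1}\mid s$, $\ell^m\mid 2s\Leftrightarrow\ell^m\mid s$, $2\ell^{m-1}\mid 2s\Leftrightarrow\ell^{m-1}\mid s$ and $2\ell^m\mid 2s\Leftrightarrow\ell^m\mid s$, because $\gcd(2,\ell)=1$. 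Plugging the canonical forms of $F_1,\dots,F_6$ into Lemma \ref{lem2.4} then outputs $\Tr_1$ and $\Tr_2$ as integers drawn from $\{\pm(\ell-1)\ell^{m-1},\pm\ell^{m-1},0\}$; for example on $F_4$ one has $2\ell^{m-1}\mid s$, $\ell^m\nmid s$ and likewise $2\ell^{m-1}\mid 2s$, $\ell^m\nmid 2s$, so the third case of Lemma \ref{lem2.4} gives $\Tr_2=\Tr_1=-\ell^{m-1}$, whereas on $F_1$ both exponents are divisible by $2\ell^m$ and $\Tr_1=\Tr_2=(\ell-1)\ell^{m-1}$.

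Finally I would reduce these integers modulo $3$. If $\ell\equiv1\pmod 3$ then $(\ell-1)\ell^{m-1}\equiv0$ and $\ell^{m-1}\equiv1\pmod 3$; if $\ell\equiv-1\pmod 3$ then $(\ell-1)\ell^{m-1}\equiv(-1)^{m-1}$ and $\ell^{m-1}\equiv(-1)^{m-1}\pmod 3$, so $-\ell^{m-1}\equiv(-1)^m$ and $-(\ell-1)\ell^{m-1}\equiv(-1)^m\pmod 3$. Substituting these congruences into the integer values found for each $F_i$ reproduces the two tables in the statement. I do not expect a genuine obstacle here; the hard part will be purely organisational — handling the redundant conditions built into the definitions of the $F_i$, carrying $j=2s$ correctly through the five-way split of Lemma \ref{lem2.4}, and checking the degenerate case $m=1$ (where $\ell^{m-1}=1$, $F_6=\emptyset$, and $x^{(q-1)/\ell^{m-1}}=1$ identically), all of which turn out to be consistent with the stated formulas.
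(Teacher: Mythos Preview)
Your approach is correct and genuinely different from the paper's. The paper does not invoke Lemma~\ref{lem2.4} at all; instead it expands the trace directly as a sum over the Galois group, using that $3$ is a primitive root modulo $\ell^m$ and $2\ell^m$ to rewrite
\[
\Tr\big(x^{(q-1)/\ell^m}\big)=\sum_{k\in\Z_{\ell^m}^*}\big(x^{(q-1)/\ell^m}\big)^k
=\sum_{k=0}^{\ell^m-1}\big(x^{(q-1)/\ell^m}\big)^k-\sum_{k=0}^{\ell^{m-1}-1}\big(x^{(q-1)/\ell^{m-1}}\big)^k,
\]
and similarly for $\Tr_2$; it then evaluates each of the four geometric sums case by case according to which of the quantities $x^{(q-1)/(2\ell^m)}$, $x^{(q-1)/\ell^m}$, $x^{(q-1)/(2\ell^{m-1})}$, $x^{(q-1)/\ell^{m-1}}$ equal $1$. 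Your route --- parametrise $x=g^s$, translate the $F_i$ into divisibility conditions on $s$, and feed $j=s$ and $j=2s$ into Lemma~\ref{lem2.4} --- is cleaner because it reuses work already done rather than reproducing an equivalent calculation from scratch. You also observe something the paper does not: $F_2=\emptyset$, since $\ell^m\mid s$ and $2\ell^{m-1}\mid s$ force $2\ell^m\mid s$. The paper treats $F_2$ as a live case and even writes down the (mutually inconsistent) values $x^{(q-1)/(2\ell^m)}=-1$, $x^{(q-1)/(2\ell^{m-1})}=1$; the resulting formula is vacuously correct, but your remark explains why. What the paper's approach buys is self-containment --- it does not rely on the cited lemma --- whereas yours buys brevity and a tighter structural picture of the partition.
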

\begin{proof}
Let $x\in\fq^*$. It is noted that $3$ is simultaneously a primitive root modulo  $\ell^m$ and $2\ell^m$. By definition of trace, we have that
$$\Tr(x^{\frac{q-1}{\ell^m}})=\sum_{j=0}^{\phi(\ell^m)-1}\big(x^{\frac{q-1}{\ell^m}}\big)^{3^j}=\sum_{k\in\mathbb{Z}_{\ell^m}^*}\big(x^{\frac{q-1}{\ell^m}}\big)^{k}=\sum_{k=0}^{\ell^m-1}\big(x^{\frac{q-1}{\ell^m}}\big)^{k}-\sum_{k=0}^{\ell^{m-1}-1}\big(x^{\frac{q-1}{\ell^{m-1}}}\big)^{k},$$
and
$$\Tr(x^{\frac{q-1}{2\ell^m}})=\sum_{j=0}^{\phi(\ell^m)-1}\big(x^{\frac{q-1}{2\ell^m}}\big)^{3^j}=\sum_{k\in\mathbb{Z}_{2\ell^m}^*}\big(x^{\frac{q-1}{2\ell^m}}\big)^{k}=\sum_{k=0}^{\ell^m-1}\big(x^{\frac{q-1}{2\ell^m}}\big)^{2k+1}-\sum_{k=0}^{\ell^{m-1}-1}\big(x^{\frac{q-1}{2\ell^{m-1}}}\big)^{2k+1}.$$
For convenience, we write 
$\Tr_1:=\Tr(x^{\frac{q-1}{\ell^m}})=\Sigma_1-\Sigma_2$ and $\Tr_2:=\Tr(x^{\frac{q-1}{\ell^m}})=\Sigma_3-\Sigma_4$. Here $\Sigma_1$ and $\Sigma_2$ represent the last two sums of the first equation; $\Sigma_3$ and $\Sigma_4$ represent the last two sums of the second equation. Consequently, we compute the four sums for each case where $x$ is an element of $F_i$ with $1\le i\le 6$, respectively, and subsequently derive the two traces as shown below:\\
$\bullet$ $x\in F_1$. Clearly, $x^{\frac{q-1}{2\ell^m}}=x^{\frac{q-1}{\ell^m}}=x^{\frac{q-1}{2\ell^{m-1}}}=x^{\frac{q-1}{\ell^{m-1}}}=1$. It implies that $\Tr_1=\Tr_2=\ell^m-\ell^{m-1}$.\\
$\bullet$ $x\in F_2$. In this case, we have that $x^{\frac{q-1}{2\ell^m}}=-1$ and $x^{\frac{q-1}{\ell^m}}=x^{\frac{q-1}{2\ell^{m-1}}}=x^{\frac{q-1}{\ell^{m-1}}}=1$. It implies that $\Tr_1=\ell^m-\ell^{m-1}$ and $\Tr_2=-\ell^m-\ell^{m-1}$.\\
$\bullet$ $x\in F_3$. Then $x^{\frac{q-1}{2\ell^m}}=x^{\frac{q-1}{2\ell^{m-1}}}=-1$ and $x^{\frac{q-1}{\ell^m}}=x^{\frac{q-1}{\ell^{m-1}}}=1$. It gives that $\Tr_1=\ell^m-\ell^{m-1}$ and $\Tr_2=-\ell^m+\ell^{m-1}$.\\
$\bullet$ $x\in F_4$. Then one finds that $x^{\frac{q-1}{2\ell^m}}=\zeta_{\ell}$, $x^{\frac{q-1}{\ell^m}}=\zeta_{\ell}'$ for some two primitive $\ell$th roots of unity $\zeta_{\ell}$ and $\zeta_{\ell}'$, and $x^{\frac{q-1}{2\ell^{m-1}}}=x^{\frac{q-1}{\ell^{m-1}}}=1$.  So we have that $\Sigma_1=\frac{1-\zeta_{\ell}'^{\ell^m}}{1-\zeta_{\ell}'}=0$, $\Sigma_2=\Sigma_4=\ell^{m-1}$ and $\Sigma_3=\frac{\zeta_{\ell}(1-\zeta_{\ell}'^{\ell^m})}{1-\zeta_{\ell}'}=0$. This gives that $\Tr_1=\Tr_2=-\ell^m$.\\
$\bullet$ $x\in F_5$. In this case, one deduces that $x^{\frac{q-1}{2\ell^{m-1}}}=-1$, $x^{\frac{q-1}{\ell^{m-1}}}=1$, $x^{\frac{q-1}{2\ell^m}}=\zeta_{2\ell}$,  and $x^{\frac{q-1}{\ell^m}}=\zeta_{\ell}$ for a primitive $2\ell$th root of unity $\zeta_{2\ell}$ and a primitive $\ell$th root of unity $\zeta_{\ell}$. So, $\Sigma_1=\frac{1-\zeta_{\ell}^{\ell^m}}{1-\zeta_{\ell}}=0$, $\Sigma_2=\ell^{m-1}$, $\Sigma_3=\frac{\zeta_{2\ell}(1-\zeta_{\ell}^{\ell^m})}{1-\zeta_{\ell}}=0$ and $\Sigma_4=-\ell^{m-1}$. It follows that $\Tr_1=-\ell^{m-1}$ and $\Tr_2=\ell^{m-1}$.\\
$\bullet$ $x\in F_6$. Then one has that $x^{\frac{q-1}{2\ell^m}}$, $x^{\frac{q-1}{\ell^m}}$, $x^{\frac{q-1}{2\ell^{m-1}}}$ and $x^{\frac{q-1}{\ell^{m-1}}}$ are roots of unity of orders $2\ell^m$, $\ell^m$, $2\ell^{m-1}$, and $\ell^{m-1}$, respectively, and none of them equals 1. It implies that $\Tr_1=\Tr_2=0$.

Combining all the cases above, we arrive at the desired results.
\end{proof}
\begin{rem}
By replacing $2\ell^m$ with $\ell^m$ in the code $\mathcal{C}_{\alpha,\beta}$, we can adapt the techniques in this paper to derive analogous results. Proposition \ref{prop4.3} suggests that some conclusions will remain unchanged, while others may differ.
\end{rem}

\section{The dual codes}
For $\alpha\in\fth$ and $\beta\in\fq$, let $\mathcal{C}_{\alpha,\beta}$ be the code defined as in Section 4, and $\mathcal{C}_{\alpha,\beta}^{\perp}$ be its dual code. Then we have the following results.
\begin{thm}\label{thm5.1}
Let $n$ be the length of $\mathcal{C}_{\alpha,\beta}$. Then $\mathcal{C}_{\alpha,\beta}^{\perp}$ is an $\left[n,n-\ell^{m-1}(\ell-1),2\right]$ code with a exceptional case $\ell\equiv-1\pmod{3}$, $m=1$, $\alpha=0$ and $\beta\ne 0$ with $j_{\beta}\in J\setminus J''$ where $\mathcal{C}_{\alpha,\beta}^{\perp}$ is is an $\left[n,n-\ell^{m-1}(\ell-1),d\right]$ code with $d\ge 3$.
\end{thm}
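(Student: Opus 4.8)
\emph{Overview.} The statement is essentially a repackaging of Theorem~\ref{thm4.2} and Proposition~\ref{prop4.1} together with the standard duality relation $\dim\mathcal{C}+\dim\mathcal{C}^{\perp}=n$, so the plan splits into three short steps: read off the length, read off the dimension, and pin down the minimum distance.

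\emph{Length and dimension.} The dual of a length-$n$ code has length $n$, so $\mathcal{C}_{\alpha,\beta}^{\perp}$ has length $n$. In the proof of Theorem~\ref{thm4.2} it is shown that the evaluation map $x\mapsto(\Tr(xd_1),\dots,\Tr(xd_n))$ is a bijection from $\fq$ onto $\mathcal{C}_{\alpha,\beta}$, whence $\#\mathcal{C}_{\alpha,\beta}=q=3^{\ell^{m-1}(\ell-1)}$ and $\dim\mathcal{C}_{\alpha,\beta}=\ell^{m-1}(\ell-1)$; this holds whenever $\mathcal{C}_{\alpha,\beta}$ is a genuine code, i.e. outside the degenerate case $\alpha=\beta=0$, $m=1$, $\ell\equiv-1\pmod 3$ (where $n=0$ and the claim is vacuous). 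Hence $\dim\mathcal{C}_{\alpha,\beta}^{\perp}=n-\ell^{m-1}(\ell-1)$, and in particular $n\ge\ell^{m-1}(\ell-1)>0$, so the announced parameters make sense.

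\emph{Minimum distance.} First I would rule out weight-one codewords: if $c\in\mathcal{C}_{\alpha,\beta}^{\perp}$ is supported on a single coordinate $i$, then $c_i\Tr(d_ix)=0$ for all $x\in\fq$, forcing $d_i=0$, contradicting $D\subseteq\fq^{*}$; hence $d(\mathcal{C}_{\alpha,\beta}^{\perp})\ge 2$. When $\beta\ne 0$, Proposition~\ref{prop4.1} already computes $A_2^{\perp}$ and shows $A_2^{\perp}>0$ except precisely when $\ell\equiv-1\pmod 3$, $m=1$, $\alpha=0$ and $j_\beta\in J\setminus J''$, where $A_2^{\perp}=0$; combined with $d\ge 2$ this gives $d(\mathcal{C}_{\alpha,\beta}^{\perp})=2$ in all non-exceptional cases and $d(\mathcal{C}_{\alpha,\beta}^{\perp})\ge 3$ in the exceptional one. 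It then remains to treat $\beta=0$, which Proposition~\ref{prop4.1} does not cover: there $D=\{x\in\fq^{*}:\Tr(x^{(q-1)/(2\ell^m)})=\alpha\}$, and since $(q-1)/(2\ell^m)$ is even while $\mathrm{char}(\fq)=3$ ensures $-d\ne d$, we have $-d\in D\iff d\in D$; picking any $d\in D$ (possible since $n>0$), the codeword with equal nonzero entries on the coordinates indexed by $d$ and $-d$ and zeros elsewhere lies in $\mathcal{C}_{\alpha,\beta}^{\perp}$, so $A_2^{\perp}>0$ and $d(\mathcal{C}_{\alpha,\beta}^{\perp})=2$.

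\emph{Main obstacle.} There is no substantial obstacle here; the analytic work was already carried out in Section~3, Proposition~\ref{prop4.1}, and Theorem~\ref{thm4.2}. The only points needing care are (a) noting that Proposition~\ref{prop4.1} is stated only for $\beta\ne 0$ and supplying the one-line $\beta=0$ argument above, (b) excising the degenerate $n=0$ case from the statement, and (c) being content with the bound $d\ge 3$ in the single exceptional family, since the present tools establish only the absence of codewords of weight $1$ and $2$ there and do not determine the exact minimum distance.
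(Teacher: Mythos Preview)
Your proposal is correct and follows essentially the same approach as the paper's own proof: both deduce the dimension from Theorem~\ref{thm4.2} via $\dim\mathcal{C}+\dim\mathcal{C}^{\perp}=n$, both invoke Proposition~\ref{prop4.1} for the $\beta\ne 0$ case, and both handle $\beta=0$ by observing that $d\in D$ implies $-d\in D$ (you make explicit the reason, namely that $(q-1)/(2\ell^m)$ is even, which the paper leaves implicit). Your treatment of the lower bound $d\ge 2$ via non-degeneracy of the trace form is equivalent to the paper's one-line appeal to surjectivity of $\Tr$.
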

\begin{proof}
First, observe that the dimensions of the dual code $\mathcal{C}_{\alpha,\beta}^{\perp}$ and the original code $\mathcal{C}_{\alpha,\beta}$ sum to $n$, as stated in Theorem \ref{thm4.2}. Therefore, the dimension of $\mathcal{C}_{\alpha,\beta}^{\perp}$ can be deduced directly from this theorem.

Next, we determine the minimum distance $d$ of $\mathcal{C}_{\alpha,\beta}^{\perp}$. It is evident that $d \geq 2$ due to the surjectivity of the trace map $\Tr$. When $\beta \neq 0$, Proposition \ref{prop4.1} implies that $A_2^{\perp} = 0$ if $\ell \equiv -1 \pmod{3}$, $m = 1$, $\alpha = 0$, and $\beta \neq 0$ with $j_{\beta} \in J \setminus J''$. Otherwise, $A_2^{\perp} > 0$, where $A_2^{\perp}$ denotes the number of codewords in $\mathcal{C}_{\alpha,\beta}^{\perp}$ with weight two. This indicates that $\mathcal{C}_{\alpha,\beta}^{\perp}$ has a minimum distance of 2 except for the aforementioned exceptional case, where $d \geq 3$.

When $\beta = 0$, let $D = \{d_1, d_2, \ldots, d_n\}$. For any element $d_i \in D$, it is clear that $-d_i \in D$ as well. Since $d_i \neq -d_i$, we must have $-d_i = d_j$ for some $j \neq i$. Consequently, the codeword $c = (c_1, c_2, \ldots, c_n)$, where $c_i = c_j = 1$ and $c_k = 0$ for all $k \neq i, j$, belongs to $\mathcal{C}_{\alpha,\beta}^{\perp}$. This implies that $\mathcal{C}_{\alpha,\beta}^{\perp}$ contains a codeword of weight two, and therefore, $d = 2$ when $\beta = 0$.

This completes the proof of Theorem \ref{thm5.1}.
\end{proof}
\begin{defn}\label{defn5.2}
A code $\mathcal{C}$ with parameter $[n,k,d]$ is called \textit{optimal} if no $[n,k,d+1]$ code exists.
\end{defn}
Then by the Hamming bound (or the sphere packing bound) and with some direct checking, the last result of the paper follows immediately.
\begin{cor}\label{cor5.2}
if $\alpha=\beta=0$, then the dual code $\mathcal{C}_{\alpha,\beta}^{\perp}$ is optimal.
\end{cor}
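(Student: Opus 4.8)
The plan is to read off the parameters of $\mathcal{C}_{0,0}^{\perp}$ from Theorems~\ref{thm4.2} and~\ref{thm5.1} and then invoke the sphere-packing (Hamming) bound to rule out any ternary code with the same length and dimension but larger minimum distance. When $\alpha=\beta=0$ the relevant cases of Theorem~\ref{thm4.2}(a) are (i) and (iii) (the pair $m=1$, $\ell\equiv-1\pmod 3$ being set aside, since $\mathcal{C}$ is empty there), and combined with Theorem~\ref{thm5.1} they give that $\mathcal{C}_{0,0}^{\perp}$ is a ternary $[n,\,n-e,\,2]$ code, where $e=(\ell-1)\ell^{m-1}$, $q=3^{e}$, and $n=\frac{(\ell^m-\ell+1)(q-1)}{\ell^m}$ when $\ell\equiv1\pmod 3$ while $n=\frac{(\ell^{m-1}-1)(q-1)}{\ell^{m-1}}$ when $\ell\equiv-1\pmod 3$ (with $m\ge2$). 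By Definition~\ref{defn5.2} it then suffices to show that no ternary $[n,\,n-e,\,3]$ code exists.

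For this, recall the Hamming bound: a ternary $[n,K,3]$ code must satisfy $3^{K}\bigl(1+2n\bigr)\le 3^{n}$. Taking $K=n-e$, such a code would force $1+2n\le 3^{e}=q$, so it is enough to establish the reverse inequality $2n\ge q$ for every admissible $(\ell,m)$. Writing $n=\bigl(1-\tfrac{\ell-1}{\ell^m}\bigr)(q-1)$ in the first case and $n=\bigl(1-\tfrac{1}{\ell^{m-1}}\bigr)(q-1)$ in the second, the ratio $\tfrac{2n}{q-1}$ exceeds $\tfrac{8}{5}$ whenever $m\ge 2$ (because $\tfrac{\ell-1}{\ell^m}\le\tfrac{\ell-1}{\ell^2}<\tfrac{1}{\ell}\le\tfrac15$ and $\tfrac{1}{\ell^{m-1}}\le\tfrac15$ for $\ell\ge 5$), and $\tfrac{8}{5}(q-1)> q$ holds trivially for $q\ge 3$; hence $2n>q$ and the Hamming bound does the job. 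A final round of ``direct checking'' then confirms that the degenerate exceptions flagged in Theorem~\ref{thm4.2}(a) cause no trouble.

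The step I expect to be the genuine obstacle is the parameter range $m=1$ with $\ell\equiv1\pmod 3$: there $n=\tfrac{q-1}{\ell}$, so $2n<q$ and the crude Hamming comparison above is inconclusive (it would even suggest that a $[n,n-e,3]$ code is not excluded on counting grounds). Handling this sub-case seems to require either a sharper non-existence argument for ternary codes of this exact length and redundancy, or restricting the statement to $m\ge 2$; I would look first at comparing $n$ directly with $\tfrac{3^{e}-1}{2}$, the maximal length of a ternary code of redundancy $e$ and distance $3$, since $n\le\tfrac{3^e-1}{2}$ is precisely the condition for a $[n,n-e,3]_3$ code to exist. Apart from this point, the argument is just a short assembly of the earlier theorems with the sphere-packing bound, so the bulk of the effort is bookkeeping across the two congruence classes of $\ell$ rather than any new idea.
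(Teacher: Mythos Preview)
Your approach is exactly the paper's: reduce optimality to the nonexistence of a ternary $[n,n-e,3]$ code via the Hamming bound, which amounts to $n>(q-1)/2$, and then read $n$ off from Theorem~\ref{thm4.2}. The paper's proof consists of precisely these two observations and asserts without further comment that the desired inequality follows.

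You have, however, located a genuine gap that the paper itself does not address. For $m=1$ and $\ell\equiv 1\pmod 3$ one has $n=(q-1)/\ell<(q-1)/2$, so the Hamming bound does not exclude an $[n,n-e,3]$ code; and, as you yourself observe, since $n\le(3^{e}-1)/2$ a shortened ternary Hamming code with exactly those parameters actually exists. Concretely, $\ell=7$, $m=1$ is admissible (as $3$ is a primitive root modulo $14$), giving $q=729$, $e=6$, $n=104$; shortening the $[364,358,3]_3$ Hamming code yields a $[104,98,3]_3$ code, so $\mathcal{C}_{0,0}^{\perp}$, which has parameters $[104,98,2]$, is \emph{not} optimal in the sense of Definition~\ref{defn5.2}. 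Thus the corollary as stated appears to fail in this sub-case, and your instinct that the statement needs the extra hypothesis $m\ge 2$ when $\ell\equiv 1\pmod 3$ looks correct; the paper's own proof simply overlooks this.
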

\begin{proof}
Let $n$ and $d$ denote the length and minimum distance of the dual code $\mathcal{C}_{\alpha,\beta}^{\perp}$, respectively. According to the sphere packing bound, we have
$$\big|\mathcal{C}_{\alpha,\beta}^{\perp}\big| \leq \frac{3^n}{\sum_{i=0}^t \binom{n}{i}2^i},$$
where $t$ is the largest integer less than or equal to $(d-1)/2$. Since $d = 2$, it follows that the dual code $\mathcal{C}_{\alpha,\beta}^{\perp}$ is optimal if $n > (q-1)/2$. Consequently, the desired result is a direct consequence of Theorem \ref{thm4.2}.
\end{proof}

\section*{acknowledgment}
The author is particularly grateful to Professor Arne Winterhof at RICAM, Austrian Academy of Sciences, for his invaluable guidance and support during the author's academic visit. His expertise in number theory and his insightful suggestions on the application of character sums to our problem have been instrumental in the completion of this work.

% use section* for acknowledgment
%\section*{Acknowledgment}
%The author would like to thank the anonymous referees for their helpful comments which improved the presentation of the paper.

\end{document}